\documentclass[a4paper,11pt]{article}
\usepackage[portrait, margin=1in]{geometry}
\usepackage[utf8]{inputenc}
\usepackage[numbers, sort, compress]{natbib}
\usepackage{nicefrac}
\usepackage{multirow}

\usepackage{xspace}

\usepackage[dvipsnames,table]{xcolor}

\usepackage{float}

\usepackage{amsmath, mathtools, amsthm, amssymb}
\usepackage{algorithm, yhmath}
\usepackage[noend]{algpseudocode}

\usepackage{nicefrac}
\usepackage{thmtools}
\usepackage{thm-restate}

\usepackage{tikz}

\usetikzlibrary{calc, shapes.geometric, arrows, automata, decorations.pathreplacing, arrows.meta}

\usepackage{hyperref}
\hypersetup{
    pdftitle={Improved Bounds with a Simple Algorithm for Edge Estimation for Graphs of Unknown Size},
    pdfpagemode=FullScreen,
    }
    
\usepackage{cleveref}

\usepackage[textwidth=4em,]{todonotes}

\theoremstyle{plain}
\newtheorem{theorem}{Theorem}
\newtheorem{lemma}[theorem]{Lemma}

\theoremstyle{definition}
\newtheorem{definition}[theorem]{Definition}

\theoremstyle{remark}

\usepackage[most]{tcolorbox}
\newtcolorbox{idea}[1][]
{
colbacktitle=cyan,
colback=cyan!10,
arc=1pt,
boxrule=1pt,
title=#1 
}

\newtcolorbox{update}[1][]
{
colbacktitle=gray,
colback=gray!10,
arc=1pt,
boxrule=1pt,
title=#1 
}

\newtcolorbox{question}[1][]
{
coltitle=black,
colbacktitle=yellow,
colback=yellow!10,
arc=1pt,
boxrule=1pt,
title=#1 
}

\newtcolorbox{note}[1][]
{
coltitle=black,
colbacktitle=green,
colback=green!10,
arc=1pt,
boxrule=1pt,
title=#1 
}

\newtcolorbox{problem}[1][]
{
coltitle=black,
colbacktitle=red!60,
colback=red!10,
arc=1pt,
boxrule=1pt,
title=#1 
}

\newcommand{\bisq}{\texttt{BIS}}
\newcommand{\isq}{\texttt{IS}}
\newcommand{\iscq}{\texttt{ISC}}

\newcommand{\graph}{G}
\newcommand{\vertexset}{V}
\newcommand{\edgeset}{E}
\newcommand{\vertexcount}{n}
\newcommand{\edgecount}{m}

\newcommand{\edge}{e}

\newcommand{\samplededges}{S}

\newcommand{\approxedgecount}{\widehat{\edgecount}}

\newcommand{\collisioncount}{r}
\newcommand{\resamplesize}{s}

\newcommand{\iscollision}{\texttt{IsCollision}}
\newcommand{\collision}{k}
\newcommand{\preresamplesize}{t}
\newcommand{\birthdayestimate}{\texttt{BirthdayEstimate}}
\newcommand{\addestimation}{\texttt{ApproxSubset}}
\newcommand{\mulestimation}{\texttt{MulEstimation}}
\newcommand{\fractionofedges}{p}
\newcommand{\approxfractionofedges}{\widehat{\fractionofedges}}
\newcommand{\edgeestimationISC}{\texttt{EdgeEstimation}}
\newcommand{\threshold}{\tau}
\newcommand{\iscoll}{coll}
\newcommand{\subsequence}{\texttt{ApproxSequence}}
\newcommand{\X}{X}

\newcommand{\birthdayevent}{\mathcal{E}_\mathrm{bir}}
\newcommand{\subseqevent}{\mathcal{E}_\mathrm{sub}}

\newcommand{\constant}{c}

\newcommand{\iterations}{\kappa}
\newcommand{\thresholdconst}{\zeta_\mathrm{sp}}
\newcommand{\trueiter}{t^*}

\newcommand{\field}[1]{\mathbb{#1}}
\newcommand{\R}{\field{R}}

\newcommand{\fbrac}[1]{\left({#1}\right)}
\newcommand{\sbrac}[1]{\left\{{#1}\right\}}
\newcommand{\tbrac}[1]{\left[{#1}\right]}
\newcommand{\abs}[1]{\left|{#1}\right|}
\newcommand{\size}[1]{\left|{#1}\right|}

\DeclareMathOperator*{\E}{\field{E}}
\DeclareMathOperator*{\Var}{\mathrm{Var}}

\newcommand{\indicator}{\mathbf{1}}

\newcommand{\uniform}{\mathrm{Unif}}

\newcommand{\approxerror}{\varepsilon}
\newcommand{\confidence}{\delta}

\newcommand{\bigo}[1]{O\fbrac{{#1}}}

\title{From Decision to Random Certificates: Exponential Separation for Edge Estimation with Independent Set Queries}

\usepackage{authblk} 

\author[1]{Debarshi Chanda}
\author[1]{Buddha Dev Das}
\author[1]{Arijit Ghosh}
\author[2]{Gopinath Mishra}

\affil[1]{Indian Statistical Institute, Kolkata, India}
\affil[2]{Institute of Mathematical Sciences, HBNI, Chennai, India }

\date{}

\begin{document}
\maketitle
\thispagestyle{empty}

\begin{abstract}
We study the problem of estimating the number of edges in an undirected,
unweighted graph using sublinear query access. We consider a query model that
preserves the structure of Independent Set (IS) queries, but augments their
output with a random certificate: given a vertex subset, the oracle returns a
uniformly random edge from the induced subgraph if one exists, and returns
null otherwise.

Using this access, we give a randomized algorithm that outputs a
$(1 \pm \varepsilon)$-approximation to the number of edges with constant
success probability using $\widetilde{O}(\log^{2} m)$ queries. This implies
an exponential separation from both standard IS queries and global
random edge-sampling models: estimating the number of edges using
standard IS queries require
$\widetilde{\Theta}\!\left(\min\left\{\sqrt{m},\, \frac{n}{\sqrt{m}}\right\}\right)$
queries, while direct random edge-sample access requires
$\widetilde{\Theta}(\sqrt{m})$ samples. Beyond separation in query complexity, our algorithm is output-sensitive: its query complexity is polylogarithmic in the number of edges in the graph. This aligns with the
classical objective in group testing, where one seeks algorithms that are both
worst-case optimal and instance-adaptive.

Conceptually, our model connects
group testing, the decision--versus--counting dichotomy, graph property
testing, and the ``power of a random certificate'', and can be viewed as a
structured form of conditional sampling of edges in graphs.
\end{abstract}

\newpage


\newpage
\setcounter{page}{1}


\section{Introduction}

We consider the problem of estimating the number of edges in an undirected,
unweighted, simple graph using a sublinear number of queries. Let
$\graph = (\vertexset,\edgeset)$ be a graph with
$\vertexset = [\vertexcount] := \{1,2,\dots,\vertexcount\}$ vertices and
$\size{\edgeset} = \edgecount$ edges.
Given query access to $\graph$ and an approximation parameter
$\approxerror \in (0,1)$, the objective is to output an estimate
$\approxedgecount$ such that
\[
(1-\approxerror)\edgecount
\;\le\;
\approxedgecount
\;\le\;
(1+\approxerror)\edgecount
\]
with success probability at least $2/3$.

\paragraph{Query model and our result.}
We work in a query model built on \emph{Independent Set} (\isq{}) queries,
introduced by \citet{harpeled18}. The \isq{} query takes a subset of vertices
$U \subseteq \vertexset$ and returns whether $U$ is an independent set.

We consider the \emph{Independent Set with Certificate} query, denoted by
$\iscq{}$, which takes as input a subset of vertices $U \subseteq \vertexset$
and returns
\[
\iscq(U) =
\begin{cases}
\edge \sim \uniform(\edgeset(U)), & \text{if } \edgeset(U) \neq \emptyset,\\[4pt]
0, & \text{otherwise},
\end{cases}
\]
where $\edgeset(U)$ denotes the set of edges in the subgraph of $\graph$
induced by $U$, and $\uniform(S)$ denotes the uniform distribution over a
(nonempty) set $S$.

Importantly, the $\iscq{}$ query preserves the structural constraint of a
standard $\isq{}$ query: the algorithm specifies exactly the same vertex
subsets as in the IS model, and the only difference lies in the response,
which returns a uniformly random edge as a certificate whenever the induced
subgraph is nonempty.

\citet{harpeled18} showed that the number of edges in a graph can be estimated using
\[
\widetilde{O}\!\left( \min\!\left\{ \frac{n^{2}}{m}, \sqrt{m} \right\} \right)
\;=\;
\widetilde{O}(n^{2/3})
\]
$\isq{}$ queries, establishing a polynomial separation from local query based algorithms that require $\Theta(n)$ queries in the worst case~\citep {eden_lower_bound18}. In a  subsequent work,
\citet{Xichen19} nearly resolved the query complexity of edge estimation using
IS queries by showing that
\[
\widetilde{\Theta}\!\left( \min\!\left\{ \frac{n}{\sqrt{m}}, \sqrt{m} \right\}
\right)
\]
queries are both necessary and sufficient. In particular, this implies a
worst-case complexity of $\widetilde{\Theta}(\sqrt{n})$ $\isq{}$ queries.

Our main result shows that the natural enrichment of $\isq{}$ queries with random certificates yields an exponential separation from standard $\isq{}$ queries for edge estimation, achieving polylogarithmic query complexity in $m$.

\begin{restatable}[Main result]{theorem}{MainUB}\label{Thm: Main Upper Bound}
There exists a randomized algorithm that, given access to a graph $\graph$
through $\iscq{}$ queries and an approximation parameter
$\approxerror \in (0,1/3)$, outputs an estimate $\approxedgecount$ of the number
of edges $\edgecount$ in $\graph$ such that, with probability at least
$\nicefrac{2}{3}$,
\[
(1-\approxerror)\edgecount
\;\le\;
\approxedgecount
\;\le\;
(1+\approxerror)\edgecount .
\]
The algorithm uses
$O(\approxerror^{-2}\log^{2}\edgecount \cdot \log\log \edgecount)$
$\iscq{}$ queries in expectation.
\end{restatable}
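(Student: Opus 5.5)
The plan combines two tools available through $\iscq{}$ queries: a birthday-paradox collision estimate for the size of an edge set we can sample uniformly, and a subset-fraction estimate that lets us walk down a chain of nested vertex sets. The macro names in the preamble ($\birthdayestimate$, $\addestimation$, $\subsequence$, the threshold $\threshold$) suggest this route.

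\textbf{Base estimator.} The first ingredient is the base estimator. Querying $\iscq(U)$ repeatedly gives i.i.d.\ uniform samples from $\edgeset(U)$. If $\size{\edgeset(U)}\le\threshold$ for a threshold $\threshold=\Theta(\approxerror^{-2})$, we can estimate $\size{\edgeset(U)}$ by the birthday method. We draw samples until the first collision, or draw $\resamplesize$ samples and count the $\collisioncount$ colliding pairs. The estimator is $\binom{\resamplesize}{2}/\collisioncount$, and a second-moment argument shows it is within $(1\pm\approxerror)$ using $O(\sqrt{\threshold}\,\approxerror^{-2})$ samples, which is $\mathrm{poly}(1/\approxerror)$. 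Without knowing $\size{\edgeset(U)}$ in advance, we can also test whether it is at most $\threshold$: if no collision appears among $O(\sqrt{\threshold})$ samples, the set is large with good probability.

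\textbf{Reduction to small sets.} When $m$ is large, we cannot sample collisions directly, so we reduce $\graph$ to a small induced subgraph while tracking the scaling factor. We build a nested sequence $\vertexset=U_0\supseteq U_1\supseteq\dots\supseteq U_\iterations$, where $U_{i+1}$ keeps each vertex of $U_i$ independently with probability $1/\sqrt2$. This keeps each edge with probability $1/2$, so $\E\size{\edgeset(U_{i+1})}\approx\size{\edgeset(U_i)}/2$. Rather than relying on concentration of the random subsampling, which fails for stars, we measure each ratio $\fractionofedges_i=\size{\edgeset(U_{i+1})}/\size{\edgeset(U_i)}$. We draw uniform edges of $\edgeset(U_i)$ via $\iscq(U_i)$ and check whether both endpoints lie in $U_{i+1}$, which we know, so no extra query is needed. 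Each ratio is estimated to multiplicative error $1\pm\approxerror/\iterations$. We stop at the first $\trueiter$ where $\size{\edgeset(U_{\trueiter})}\le\threshold$, detected by the collision test, and output
\[
\approxedgecount=\birthdayestimate(U_{\trueiter})\cdot\prod_{i<\trueiter}\approxfractionofedges_i^{-1}.
\]
The product of $\iterations=O(\log m)$ factors, each accurate to $1\pm\approxerror/\iterations$, gives total error $1\pm O(\approxerror)$.

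\textbf{Main obstacle and query count.} The hardest part is the ratio estimation. A ratio can be tiny, for example a star where the center is dropped, and then the product blows up. The fix I would try first is to resample $U_{i+1}$ whenever $\approxfractionofedges_i<1/4$. For any fixed edge set, $\E[\fractionofedges_i]=1/2$ and $\fractionofedges_i\le1$, so by Markov $\Pr[\fractionofedges_i\ge1/4]\ge1/3$. The expected number of retries is therefore $O(1)$, which is why the theorem bounds queries only in expectation. With each accepted ratio at least $1/4$, estimating it to $1\pm\approxerror/\iterations$ takes $O(\iterations^2\approxerror^{-2}\log(\iterations/\confidence))$ samples by Chernoff. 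That naive cost gives $\log^3 m$. To reach $O(\approxerror^{-2}\log^2m\log\log m)$, I would not make each factor individually accurate. Instead I would estimate $\log$ of the product as a sum of independent terms, each with relative variance $O(1/\text{samples})$, so that variances add. Using $O(\approxerror^{-2}\iterations)$ samples per level then gives total relative variance $O(\approxerror^2)$. The $\log\log m$ factor comes from a union bound over the $\iterations$ stopping and threshold tests, each run with confidence $1/\iterations$. Summing $O(\approxerror^{-2}\log m\cdot\log\log m)$ queries over $O(\log m)$ levels gives the stated bound, and $\iterations$ can be chosen adaptively since we stop once collisions appear.
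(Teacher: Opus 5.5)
\paragraph{The gap: nothing bounds $\kappa$.} You assert that the chain has $\kappa=O(\log m)$ levels (and implicitly $\mathbb{E}[\kappa^2]=O(\log^2 m)$), but your retry rule breaks the natural justification. The expected query bound, and even a bound on the running time, depend on this.

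Under your rule, $U_{i+1}$ is drawn from the $1/\sqrt2$-subsample \emph{conditioned on acceptance}. The fact $\mathbb{E}[p_i]=1/2$ then says nothing about $\mathbb{E}[p_i\mid \widehat{p}_i\ge 1/4]$. Your reverse-Markov step is fine and gives $O(1)$ expected retries, but it controls the ratio only from below. The only information you use is $p_i\in[0,1]$ with $\mathbb{E}[p_i]=1/2$. A single edge satisfies this with $p_i\in\{0,1\}$, and there the accepted ratio is always $1$. So geometric shrinkage must come from a structural property of graphs with many edges. Concretely, you need: once $m(U_i)$ exceeds a large constant, the accepted ratio is at most some $c<1$ with good probability. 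That is a concentration statement, requiring you to isolate the few high-degree vertices and show the rest of the edge mass concentrates. It is exactly the sparsification-type argument you set aside.

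Your remark that concentration fails for stars is true for $m(U_{i+1})$ of a single subsample. The paper's key observation is different: for a uniform random \emph{bipartition}, the sum $m(U^1)+m(U^2)$ does concentrate within $c\sqrt{m}\log m$ of $m/2$, even for stars. This follows from a Doob martingale whose increments are of order $\sqrt{\deg(t)}$ up to logarithmic factors. Intuitively, whichever side receives the star's center also receives about half the leaves.

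\paragraph{How the paper avoids the conditioning problem.} It never retries. $U'$ is one part of a uniform bipartition, so $m(U')\le m(U^1)+m(U^2)$ and $\mathbb{E}[m(U')\mid U]\le m(U)/2$ \emph{whichever} part is selected. Iterating gives $\mathbb{E}[m(U_t)]\le m/2^t$, and Markov's inequality bounds the tail of the stopping time. This yields $\mathbb{E}[\kappa^2\log\kappa]=O(\log^2 m\log\log m)$. The sparsification lemma then supplies the lower bound $1/8$ and upper bound $3/4$ whenever $m(U)\ge\zeta_{\mathrm{sp}}$. With $O(j)$ sampled edges at level $j$, the algorithm picks a part with ratio in $[1/32,3/4]$.

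\paragraph{Smaller points.} Your per-level budget $\Theta(\kappa/\varepsilon^2)$ and your confidences $1/\kappa$ presuppose $\kappa$, which is unknown while the chain is being built. As in the paper, first build the whole chain with level-dependent confidences such as $2^{-(j+5)}$. Then estimate every ratio with fresh samples; this also avoids the selection bias you would incur by reusing the acceptance-test samples. With that change, your Chebyshev bound on $\prod_i \widehat{p}_i/p_i$ via additive relative variances is sound. It is cleaner than the paper's Hoeffding bound on $\sum_i x_i$ and even removes the $\log\kappa$ factor. In the paper that factor comes from the confidence $1/(16\kappa)$ in \texttt{MulEstimation}, not from the collision tests. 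Finally, the birthday estimator needs $O(\sqrt{\tau}/\varepsilon)$ samples, not $O(\sqrt{\tau}\,\varepsilon^{-2})$, and a constant $\tau$ suffices.
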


The query complexity of the algorithm is \emph{strictly output-sensitive} in
the sense that it depends only on the number of edges, with no dependence on
$n$, even in lower-order terms.

\Cref{Thm: Main Upper Bound} not only demonstrates the power gained by
augmenting $\isq{}$ access with random certificates over standard $\isq{}$
access, but also underscores the strength of sampling within induced
subgraphs. When edges can only be sampled uniformly at random from the entire
graph, estimating the number of edges requires
$\widetilde{\Theta}(\sqrt{m})$ samples~\citep{Mitzenmacher_Upfal_2005}.  In contrast, allowing the algorithm to sample a uniformly random edge from a queried induced subgraph yields an exponential separation from global random edge-sampling models, achieving query complexity $\widetilde{O}(\log^2 m)$.


In what follows, we highlight conceptual connections between our results and
query model with several well-studied themes, including group testing, graph
property testing, the power of random certificates, decision versus counting
problems, and conditional sampling.

\paragraph{Group testing perspective.}
The introduction of $\isq{}$ queries in graph algorithms was originally
motivated by ideas from \emph{group testing}. In classical group testing,
introduced by \citet{Dorfman/AnnalsofMathStat/1943/GroupTestingFirst} and
studied extensively since then, the goal is to infer the size of an unknown
set of defective items by querying subsets of a known universe
\citep{AldridgeJS/Book/2019/GroupTestingIT}. A standard group test returns only
a binary result indicating whether the queried subset contains at least one
defective item.

This viewpoint applies naturally to graph edge estimation. Here, edges play
the role of defective items, and the allowed vertex pairs for queries impose
structural constraints on the possible subsets of edges that can be queried.
\citet{harpeled18} introduced both $\isq{}$ and Bipartite Independent Set
($\bisq{}$) queries through this lens. A $\bisq{}$ query takes two disjoint
vertex sets as input and returns whether there exists an edge with one
endpoint in each set.

In the same vein, \citet{BishnuGM23} and \citet{AssadiCK21} introduced the
Edge Emptiness query $\mathrm{EE}$ to study triangle estimation and
connectivity in graphs, respectively. An $\mathrm{EE}$ query specifies a set
of candidate edges and reports whether at least one of them exists in the
graph.

\paragraph{Decision versus counting.}
In a series of seminal works,
\citet{Stockmeyer/stoc/83/ComplxtyApprxCnt,
Stockmeyer/siamcomp/85/ApprxAlgoforSharpP} studied the relationship between
approximate counting and decision access for Boolean formulas. In particular,
they showed that given decision access to satisfiability, one can
approximately count the number of satisfying assignments of any Boolean
formula on $n$ variables using only $O(\log n)$ such queries.

Our work establishes an analogous phenomenon in the setting of graph edge
estimation: augmenting decision-style $\isq{}$ queries with a random
certificate enables efficient approximate counting of edges in a graph.

\paragraph{Conditional sampling.}
Our model also bears a close conceptual relationship to work on property
testing and estimation with conditional sampling access. In distribution
testing, a conditional sampling oracle~\citep{ChakrabortyFischerGoldhirshMatsliah/ITCS/2013/CondSamp,DBLP:journals/siamcomp/CanonneRS15/Conditional} allows the algorithm to specify a
subset of the domain and receive a sample drawn from the distribution
restricted to that subset. Such structured sampling can lead to exponential
savings over standard sampling in tasks such as uniformity testing, support
size estimation, and testing of joint distributions
\citep{CanonneRS14,BhattacharyyaC18,CanonneCKLW21,Bhattacharyya0P24}.

In our setting, the $\iscq{}$ oracle plays an analogous role: the algorithm
specifies a vertex subset, and if the induced subgraph is nonempty, it
receives a uniformly random edge from that subgraph. Thus, $\iscq{}$ can be
viewed as a structured conditional sampling primitive over the edge set of
the graph, where the conditioning is constrained to induced subgraphs. Our
results demonstrate that, as in distribution testing, such conditional
access can fundamentally change the complexity of estimation problems in
graphs.

\paragraph{Power of a random certificate.}
Studying the power of random certificates for approximate counting is an
important question across a variety of query models
\citep{RonTsur/TOCT/2016/PowerofAnExample,
ChakrabortyCKM/ICALP/2023/SATvsNPOracle}. These works show that augmenting
group queries with access to random certificates can lead to significant
reductions in query complexity.

In the graph setting, several queries have been introduced from the viewpoint
of group queries while retaining the structural properties of the graph. Two
notable examples are $\isq{}$ and $\bisq{}$, introduced by
\citet{harpeled18} for estimating the number of edges. Our $\iscq{}$ query fits
naturally within this framework: it preserves the structure of $\isq{}$ while
adding a random certificate. In particular, our algorithm achieves
substantially lower query complexity compared to algorithms using only
$\isq{}$ access, and also improves over the best known
$\widetilde{O}(\log^{5} n)$-query algorithm based on $\bisq{}$ access due to
\citet{addanki_et_al:LIPIcs.ESA.2022.2}.

The main difference between the power of random certificates studied in
\citep{RonTsur/TOCT/2016/PowerofAnExample,
ChakrabortyCKM/ICALP/2023/SATvsNPOracle} and our setting is that our query
oracles are group graph queries that retain the structural properties of the
underlying graph.

\paragraph{Graph property testing.}
Estimating the number of edges in a graph with sublinear queries has been
studied in several access models. Early works focused on local queries such as
degree, neighbour, and pair queries
\citep{feige04,avg_deg_danaron08,eden_lower_bound18,
TetekThorup/STOC/2022/EdgeSamplingFullNbrhood}, which provide fine-grained
access but face inherent limits for edge estimation.

This led to models that incorporate global information. In particular,
augmenting local access with random edge sampling
\citep{Aliakbarpour/Algorithmica/2018/StarSubgraphEdgeSampling,
EdenRS19,BerettaCS/SODA/2026/FasterEdgeEstimation} enables more efficient
estimation. More recently, a finer-grained analysis of edge estimation under
local queries enriched with random edge sampling have provided an understanding of the tradeoffs
between local and global access
\citep{BishnuCDM2025,BishnuCM_degree_distribution_2025,
Chanda_edge_estimation_unknown,BerettaCS/SODA/2026/FasterEdgeEstimation}.

Another line of work studies structured global queries, notably $\isq{}$ and
$\bisq{}$ \citep{harpeled18,Xichen19,addanki_et_al:LIPIcs.ESA.2022.2}. These
queries provide decision-only access over vertex subsets while preserving
graph structure, and have been extended to hypergraphs
\citep{DellLM22,BhattacharyaBGM24,DellLM24}. We discussed their known upper and
lower bounds earlier in the Introduction.

Finally, \citet{AdarHotamLevi/Arxiv/2601.21457} showed that augmenting
$\isq{}$ queries with additional local queries yields a quadratic separation
from using $\isq{}$ queries alone, further illustrating the power of combining
different query primitives. Other than the \bisq{} query, all query models discussed above requires polynomial number of queries in the worst case.

\subsection*{Notations}

Throughout, the notation
$\widetilde{O}(f)$, suppress factors polynomial in
$1/ \approxerror$ and $\log f$. For an integer $k$, we define $[k] = \sbrac{1,\dots,k}$.
We write $a = (1 \pm \approxerror) b$ to indicate that $a$ is a $(1 \pm \approxerror)$-approximation of $b$. 

Throughout, we use $G = (\vertexset, \edgeset) $ to denote the input graph with $ \edgecount = \abs{\edgeset} $ number of edges  and $ \vertexset = \abs{\vertexset} $ vertices. For a subset $ U \subseteq \vertexset $, let $ \edgecount( U ) $ be the number of edges in the graph induced by the subset $ U $. 





\color{black}

\section{Technical overview}

We give a high-level description of our algorithm for estimating the number of edges in an undirected graph using Independent Set with Certificate (\iscq{}) queries. Conceptually, our approach converts approximate counting into a sequence of controlled decisions and sampling steps that gradually reduce the graph while preserving a constant fraction of its edges. 

A central conceptual contribution is the use of \emph{approximation preserving subsets} and \emph{approximation preserving sequences}, which formalise how much of the graph’s edge mass is retained at each stage and guide both the algorithm and its analysis. We leverage graph sparsification techniques to enable iterative construction of approximation-preserving sequences on graphs using \iscq{} queries.

\paragraph{Outline of the Algorithm.} At a high level, our algorithm maintains a nested sequence of vertex subsets \[ \vertexset = U_0 \supseteq U_1 \supseteq \cdots \supseteq U_\kappa, \] where each $U_j$ contains a constant fraction of the edges of $U_{j-1}$. The algorithm alternates between two modes: 

\begin{itemize} 

\item If the current induced subgraph is ``small'' (that is, has fewer than $\threshold$ edges), we estimate its edge count directly. 

\item Otherwise, we construct a smaller approximation preserving subset of the graph by randomly partitioning the vertex set and selecting a part that provably retains a constant fraction of the edges. 

\end{itemize} 

This iterative reduction of the graph continues until we reach a subset $U_\kappa$ whose edge count is small enough to estimate accurately; we set the threshold for $U_\kappa$ to be a sufficiently large constant. The final estimate of $m$ is then reconstructed by multiplying the relative edge losses across all steps.

\paragraph{Where the $\widetilde{O}(\log^2 m)$ query complexity comes from.} A key novelty of our approach is that all complexity bounds are driven by $\log m$ rather than $\log n$. This is achieved through three ideas that are not present in prior $\isq{}$ or $\bisq{}$ based algorithms: 
\begin{itemize} 

\item \emph{Random certificates enable output-sensitive sparsification:} Unlike decision-only $\isq{}$ or $\bisq{}$ queries, ISC queries allow us to estimate how many edges of a subset lie in a randomly chosen part, which lets us iteratively construct smaller induced subgraphs to work on while tracking edge mass locally. 

\item \emph{Approximation Preserving Sequence:} Instead of estimating $m$ directly, we construct a chain of subsets $U_0 \supseteq U_1 \supseteq \cdots \supseteq U_\kappa$ such that each $U_j$ retains a constant fraction of edges of $U_{j-1}$.  
Crucially, we establish $\kappa$ to be $O(\log m)$ in expectation, independent of $n$. We also bound the number of queries required to construct this sequence as $O(\kappa^2)$ and bound its expectation.

\item \emph{Local multiplicative reconstruction:} We only need to estimate the ratios $p_j = \frac{m(U_j)}{m(U_{j-1})}$, each of which is $\Omega(1)$ due to the $U_0,\ldots,U_\kappa$ being an approximation preserving subsequence. This keeps the per-level query complexity small and makes the overall cost in expectation $\widetilde{O}(\kappa^2) = \widetilde{O}(\log^2 m)$. 

\end{itemize}

\paragraph{Estimating Small Edge Counts.} 

Fix a threshold $\threshold$. One can design an algorithm that makes $O(\sqrt{\tau})$ \iscq{} queries and guarantees the following with desired probability about a given subset $U \subseteq V$. If $m(U)\leq \tau$, then the algorithm reports an approximation of $m(U)$. Moreover, if $m(U) \geq 8\tau$, the algorithm reports that $m(U)$ is large. The proof is based on the birthday paradox. The details can be found in \Cref{sec: threshold checking} and Appendix \ref{appendix: threshold checking}.

\paragraph{Graph Sparsification.}

The idea here is to leverage graph sparsification to construct a smaller subset of vertices to work on at each step. We partition $U$ uniformly at random into two subsets $U^1$ and $U^2$. A graph sparsification lemma, along the result of \citet{harpeled18}, shows that with desired probability,

\[ 
    \left| \frac{\edgecount(U)}{2} - \sum_{i=1}^{2} \edgecount(U^i) \right| \leq c \sqrt{\edgecount(U)} \log \edgecount(U). 
\]

We provide the proof in \Cref{sec: sparsification}. Intuitively, this means that the total number of edges inside $U^1$ and $U^2$ is close to half of $\edgecount(U)$, up to a lower-order error term. As long as $\edgecount(U)$ is sufficiently large, this implies that at least one of $U^1$ or $U^2$ must contain a constant fraction of the edges of $U$. More precisely, there exists an absolute constant $\thresholdconst$ such that if $\edgecount(U)\ge \thresholdconst$, then with probability at least $1-1/\edgecount(U)$, at least one of $m(U^1)/m(U)$ and $m(U^2)/m(U)$ is at least $1/8$, and both $m(U^1)/m(U)$ and $m(U^2)/m(U)$ are at most $3/4$.

\paragraph{Choosing the Good Part.} After partitioning $U$ into $U^1$ and $U^2$, we must identify which part retains an appropriate fraction of the edges. Note that this step is fundamentally enabled by the $\iscq{}$ queries. The idea is to take a random sample of the edges in $U$, and check whether it is preserved in the induced subgraph of either $U^1$ or $U^2$.

To do this, we draw random edges from $U$ via \iscq{} queries. Let $\widehat{p}(U^i)$ be the fraction of sampled edges that fall inside $U^i$, for $i \in [2]$. If $\widehat{p}(U^i) \ge 1/16$, we select $U^i$ as the next subset; otherwise, we pick an arbitrary part. A Chernoff bound guarantees that, with desired probability, we correctly identify a subset satisfying 
\[ 
    \frac{1}{32}\leq \frac{\edgecount(U^i)}{\edgecount(U)} \le \frac{3}{4}. 
\] 
Such subsets are called \emph{approximation preserving subsets}. The upper bound of $3/4$ is useful to bound the final query complexity and the lower bound of $1/32$ is useful for guaranteeing the quality of the estimator.

\paragraph{Constructing an Approximation Preserving Sequence.} Starting with $U_0=\vertexset$, we iterate: 

\begin{enumerate} 

\item Run threshold checking on $U_j$. 

\item If $\edgecount(U_j) < \threshold$, stop and set $U_\kappa = U_j$. 

\item Otherwise, sparsify $U_j$ and select an approximation preserving subset $U_{j+1}$. 

\end{enumerate} 

This yields a sequence \[ \vertexset = U_0 \supseteq U_1 \supseteq \cdots \supseteq U_\kappa, \] where, with probability at least $4/5$, each $U_j$ satisfies 
\[ 
    \frac{1}{32}\leq \frac{\edgecount(U_j)}{\edgecount(U_{j-1})} \le \frac{3}{4}. 
\]

\paragraph{Final Estimation.} For each $j\in [\iterations]$, define 
\[ 
\fractionofedges_j = \frac{\edgecount(U_j)}{\edgecount(U_{j-1})}. 
\] 
Then we have the identity 
\[ 
\edgecount = \frac{\edgecount(U_\kappa)}{\prod_{j=1}^{\kappa} \fractionofedges_j}. 
\] 
Instead of estimating each $m(U_j)$ from scratch, we estimate only the ratios $p_j$. Because each $p_j = \Omega(1)$ by construction of the approximation preserving sequence, we can estimate each $p_j$ using small number of queries,  independent of $n$ or $m(U_j)$. This yields an estimate $\widehat{\fractionofedges}_j$ satisfying the following with desired probability. 
\[ 
    \widehat{\fractionofedges}_j = (1\pm \approxerror/O( \sqrt{\kappa}))\,\fractionofedges_j 
\] 

Note that the entire approximation preserving subsequence is constructed a priori. Hence, we know $\kappa$ during the final estimation phase that we use to fix the approximation parameters appropriately for the multiplicative estimation for each $p_j, j \in [\kappa]$.
Let $\approxedgecount(U_\kappa)$ be the estimate obtained via the birthday-based estimator. Our final output is \[ \widehat{\edgecount} = \frac{\approxedgecount(U_\kappa)} {\prod_{j=1}^{\kappa} \widehat{\fractionofedges}_j}. \] A careful multiplicative error analysis shows that this is a $(1\pm \approxerror)$-approximation to $m$.

\paragraph{Proof Roadmap.} The correctness proof proceeds in the following stages: 
\begin{itemize} 
\item \emph{Threshold correctness:} We show that the collision-based test reliably distinguishes between $\edgecount(U)<\threshold$ and $\edgecount(U)\ge 8\threshold$, and that the birthday-paradox estimator is accurate when $\edgecount(U)<\threshold$. 

\item \emph{Constructing approximation preserving subsequence:}  {While we decrease the failure probability of choosing a good subset at each iteration, the probability of failure of the sparsification step depends on the number of edges remaining. This necessitates a careful analysis of the failure probability of the sparsification step and of our threshold-checking mechanisms to establish the correctness of our algorithm and query complexity bounds.}

\item \emph{Multiplicative reconstruction:} Given our approximation preserving subsequence, we estimate each $p_j$ with appropriately scaled approximation error. Then, we bound the cumulative error from estimating all $\fractionofedges_j$ and show that the product estimator preserves a global $(1\pm \approxerror)$ approximation. 

\item \emph{Complexity analysis:} We show that $\mathbb{E}[\kappa] = O(\log m)$ and that the total query complexity is dominated by estimating the $p_j$'s, yielding an overall expected complexity of $\widetilde{O}(\log^2 m/\varepsilon^2)$. 

\end{itemize}

\paragraph{Outline}
We begin in \Cref{sec: threshold checking}, where we present the threshold-checking algorithm and show how to estimate the edge count when the number of edges is below a given threshold. In \Cref{sec: sparsification}, we describe the algorithm for constructing an approximation-preserving subsequence. In \Cref{section: approx subset}, we establish how graph sparsification yields an approximation-preserving subset, and use this result in \Cref{section: approximate sequence} to construct an approximation-preserving subsequence. Finally, in \Cref{section: full algorithm}, we present the complete algorithm for edge estimation.


\section{Threshold checking}\label{sec: threshold checking}

Let a fixed threshold ($ \threshold \geq 1$) be given. We consider the following threshold-checking problem: given a subset of vertices $ U \subseteq \vertexset $, determine whether $ \edgecount( U ) < \threshold $. If this condition holds, the goal is to estimate the $ (1 \pm \approxerror ) $ approximation of $ \edgecount( U )$.

To address this task, we present two algorithms \iscollision{} and \birthdayestimate{} based on the birthday paradox techniques \citep[Chapter 5]{Mitzenmacher_Upfal_2005}. \iscollision{} distinguishes between the cases $ \edgecount( U ) < \threshold $ or $ \edgecount( U ) \geq 8\threshold $ (\Cref{lemma: collision detection}). \birthdayestimate{} outputs an $ (1 \pm \approxerror) $ approximation of $ \edgecount( U ) $, if $ \edgecount( U ) < \threshold $ (\Cref{lemma: birthdayestimate}). The corresponding guarantees are stated below. For details, refer to Appendix~\ref{appendix: threshold checking}.


\begin{restatable}[Correctness of \iscollision{}]{lemma}{isollison}
    For a subset $ U \subseteq \vertexset $ and a threshold $ \threshold \geq 1 $, the following statements hold for \iscollision{} with probability at least $ 1 - \confidence$:\label{lemma: collision detection}
\begin{enumerate}
    \item[(i)] If $ \edgecount( U ) \leq  \threshold $, then $\iscollision{}$ outputs $1$.
    \item[(ii)] If $ \edgecount( U ) \geq 8 \threshold $, then $\iscollision{}$ outputs $0$.
\end{enumerate}
The number of \iscq{} queries made is $ O( \sqrt{ \threshold } \log ( \confidence^{ -1} ))$
\end{restatable}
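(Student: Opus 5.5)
The algorithm \iscollision{} is not spelled out in the text preceding the statement, so I fix a natural instantiation and analyse it. One \emph{trial} draws $\preresamplesize = \lceil c_0\sqrt{\threshold}\rceil$ independent edges by calling $\iscq(U)$ $\preresamplesize$ times; $c_0$ is an absolute constant fixed below. Each call returns an independent uniform sample from $\edgeset(U)$. If $\edgeset(U)=\emptyset$, every call returns $0$; we then output $1$ immediately, which is correct because $\edgecount(U)=0\le\threshold$. A trial \emph{succeeds} if two of its samples coincide. We run $\iterations' = \Theta(\log(\confidence^{-1}))$ independent trials. We output $1$ if the fraction of successful trials is at least some cutoff $\theta$, and $0$ otherwise. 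The query count is $\preresamplesize\cdot\iterations' = O(\sqrt{\threshold}\log(\confidence^{-1}))$, as the lemma requires.

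\textbf{Key step: constant-gap birthday bounds for a single trial.} Let $M=\edgecount(U)$ and let $C$ be the number of colliding pairs among the $\preresamplesize$ samples. Then $\E[C]=\binom{\preresamplesize}{2}/M$.

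\emph{Case (ii), $M\ge 8\threshold$.} A union bound gives
\[
\Pr[\text{success}] \le \E[C] \le \frac{\preresamplesize^2}{2M} \le \frac{\preresamplesize^2}{16\threshold} \approx \frac{c_0^2}{16}.
\]

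\emph{Case (i), $M\le\threshold$.} The birthday product bound gives
\[
\Pr[\text{no success}] = \prod_{i<\preresamplesize}\Bigl(1-\frac{i}{M}\Bigr) \le \exp\Bigl(-\frac{\preresamplesize(\preresamplesize-1)}{2M}\Bigr) \le \exp\Bigl(-\frac{\preresamplesize(\preresamplesize-1)}{2\threshold}\Bigr) \approx e^{-c_0^2/2}.
\]
If $\preresamplesize > M$, a collision is certain by pigeonhole, and the bound still holds.

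\textbf{Choosing the constants.} Take $c_0=1$. A small-$\threshold$ correction is needed here. With $\preresamplesize=\lceil\sqrt{\threshold}\rceil$, the exponent $\preresamplesize(\preresamplesize-1)/(2\threshold)$ vanishes at $\threshold=1$. Instead, set $\preresamplesize=\lceil\sqrt{\threshold}\rceil+1$, so that $\preresamplesize(\preresamplesize-1)\ge\threshold$. Then the success probability is at least $1-e^{-1/2}\approx 0.39$ in case (i). In case (ii), using $\preresamplesize^2\le(\sqrt{\threshold}+2)^2\le 9\threshold$ for $\threshold\ge1$, it is at most $9/16\approx0.56$.

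These two bounds cross, so the gap is not good enough for every $\threshold$. I would therefore take $\preresamplesize$ as a larger multiple, namely $\preresamplesize=\lceil 2\sqrt{\threshold}\rceil+1$. This makes the case (i) success probability at least $1-e^{-2}\approx 0.86$. The case (ii) bound is then at most about $(2\sqrt{\threshold}+2)^2/(16\threshold)$, which is $\le 1/2$ for large $\threshold$ but not for $\threshold=1$. Because of this tension, I would avoid the weak union bound for small $\threshold$ and handle case (ii) with the exact product. With $M\ge 8\threshold$ and $\preresamplesize\le 2\sqrt{\threshold}+2$,
\[
\Pr[\text{success}] \le 1-\prod_{i<\preresamplesize}\Bigl(1-\frac{i}{M}\Bigr),
\]
and at $\threshold=1$, $M=8$, $\preresamplesize=3$ this equals $1-\frac{7}{8}\cdot\frac{6}{8}\approx0.34$. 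In general one obtains a constant gap $q_{\mathrm{hi}} > q_{\mathrm{lo}}$ between the two cases. \textbf{This calibration is the main obstacle}: choosing $c_0$ and the additive offset so that a uniform constant gap holds for all $\threshold\ge1$. If it becomes tedious, the fallback is to enlarge the factor $8$'s effective slack by splitting into $\threshold\le\threshold_0$, checked directly, and $\threshold>\threshold_0$, handled by the asymptotic bounds.

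\textbf{Amplification.} Set $\theta=(q_{\mathrm{lo}}+q_{\mathrm{hi}})/2$. Each trial is an independent Bernoulli variable whose mean is at least $q_{\mathrm{hi}}$ in case (i) and at most $q_{\mathrm{lo}}$ in case (ii). A Chernoff--Hoeffding bound shows that the empirical fraction lands on the wrong side of $\theta$ with probability $\exp(-\Omega(\iterations'))\le\confidence$ for $\iterations'=\Theta(\log\confidence^{-1})$. This yields statements (i) and (ii) with probability at least $1-\confidence$.
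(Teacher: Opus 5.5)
Your proposal is essentially the paper's argument. Both bound the per-trial collision probability for $\Theta(\sqrt{\tau})$ samples via the birthday paradox (the product bound with $1-x\le e^{-x}$ when $m(U)\le\tau$, an upper bound on collisions when $m(U)\ge 8\tau$), then amplify a vote over $O(\log\delta^{-1})$ independent trials with Chernoff; your handling of integrality of $t$, small $\tau$, and $m(U)=0$ is more careful than the paper's choice $t=\sqrt{2\tau}$.

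The calibration you flag as the main obstacle, and leave unverified, closes at once if you keep the exact first moment instead of $t^2/(2M)$. With $t=\lceil 2\sqrt{\tau}\rceil+1$, case (ii) gives collision probability at most $\binom{t}{2}/(8\tau)\le\frac14+\frac{3}{8\sqrt{\tau}}+\frac{1}{8\tau}\le\frac34$. Case (i) gives at least $1-e^{-2}>0.86$, so there is a uniform constant gap for every $\tau\ge 1$.
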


Suppose that a collision is detected, i.e., \iscollision{} outputs $1$. We can obtain a
$ (1 \pm \approxerror)$-approximation of $\edgecount(U)$ using additional
\iscq{} queries. This estimation procedure is implemented in \birthdayestimate{}.


\begin{restatable}[Correctness of \birthdayestimate{}]{lemma}{birth}\label{lemma: birthdayestimate}
For a subset $ U \subseteq \vertexset $, threshold $ \threshold \geq 1 $ and approximation error $ \approxerror > 0 $. If $ \edgecount( U ) \leq  \threshold $, then \birthdayestimate{} outputs an $ (1 \pm \approxerror) $ estimate of $ \edgecount( U ) $ with probability at least $\nicefrac{ 15 }{ 16 }$. The number of \iscq{} query made is $ \Theta\fbrac{  \approxerror^{-1} \sqrt{\threshold} } $.
\end{restatable}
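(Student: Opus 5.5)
We have to design \birthdayestimate{} ourselves and then analyse it. The algorithm draws $s = \Theta(\approxerror^{-1}\sqrt{\threshold})$ independent samples by calling $\iscq(U)$ $s$ times; each call returns a uniform edge of $\edgeset(U)$, since $\edgecount(U)\ge 1$ after \iscollision{} has reported a collision. It then counts the number $X$ of colliding pairs, $X=\sum_{a<b}\indicator[e_a=e_b]$, and outputs $\approxedgecount(U)=\binom{s}{2}/X$.

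\textbf{Moments of $X$.} Write $M=\edgecount(U)\le\threshold$. Each pair collides with probability $1/M$, so
\[
\mu := \E[X] = \binom{s}{2}\frac{1}{M}.
\]
Since $M\le\threshold$ and $s^2 \gtrsim \approxerror^{-2}\threshold$, we get $\mu \ge C\approxerror^{-2}$ for a constant $C$ that we can make large by choosing the constant in $s$.

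For the variance, expand $\Var[X]$ as a sum of covariances over pairs of pairs.
\begin{itemize}
\item Disjoint pairs $\{a,b\}$ and $\{c,d\}$ give independent indicators, so their covariance is zero.
\item Pairs sharing one index, such as $\{a,b\}$ and $\{a,c\}$, have $\Pr[\text{both collide}]=1/M^2$, which equals the product of the marginals. Their covariance is also zero.
\end{itemize}
This pairwise independence of the collision indicators is the crux, and it is where I would be most careful. It gives
\[
\Var[X]=\binom{s}{2}\Bigl(\frac1M-\frac1{M^2}\Bigr)\le\mu .
\]

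\textbf{Concentration.} By Chebyshev,
\[
\Pr\bigl[|X-\mu|>(\approxerror/3)\mu\bigr]\le \frac{9}{\approxerror^2\mu}\le \frac{9}{C}.
\]
Choosing $C\ge 144$ makes this at most $1/16$. On the complementary event, $X=(1\pm\approxerror/3)\mu$, and therefore
\[
\binom{s}{2}\Big/X = M\bigl(1\pm\approxerror/3\bigr)^{-1} = (1\pm\approxerror)M,
\]
using $\approxerror<1$ to bound $1/(1-\approxerror/3)\le 1+\approxerror$. The event $X>0$ holds automatically on this event because $\mu$ is large, so the output is well defined.

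\textbf{Query count.} The number of queries is exactly $s=\Theta(\approxerror^{-1}\sqrt{\threshold})$. The only real obstacle is the variance bound, and it disappears once the pairwise independence observation above is checked. The case $M<\threshold$ needs no separate treatment, because a smaller $M$ only increases $\mu$.
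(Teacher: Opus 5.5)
Your proposal is correct and follows essentially the same route as the paper: the estimator $\binom{s}{2}/X$ on collision counts, zero covariance for pairs sharing an index so that $\Var(X)\le \E[X]$, Chebyshev at accuracy $\approxerror/3$, and then inverting the bound. You also make explicit two details the paper leaves implicit: choosing the constant in $s$ so that $\E[X]\ge 144\,\approxerror^{-2}$, and noting that $X>0$ on the good event, so the output is well defined.
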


\section{Sparsification}
\label{sec: sparsification}

In this section, we will describe a procedure to generate an approximation preserving sequence (formally defined below) $U_0, \ldots, U_\kappa$ with probability at least $7/8$. Moreover, the expected query complexity of the procedure is $O(\log^2 m)$.

\begin{definition}[Approximation preserving subset]
For $U,U' \subset V$ such that  $U' \subseteq U$, we say that $U'$ is an \emph{approximation preserving subset} of $U$ if
\[
  \frac{1}{32} \le  \frac{m(U')}{m(U)}\le \frac{3}{4}.\]

\end{definition}

\begin{definition}[Approximation preserving sequence]
For a sequence $U_0, U_1, \ldots, U_\kappa \subseteq V$ with
$U_0 \supseteq U_1 \supseteq \ldots \supseteq U_\iterations$,
We say that $U_0, U_1, \ldots, U_\kappa$ is an \emph{approximation preserving sequence} if $U_{i-1}$ is an approximation preserving subset of $U_i$
for all $i \in [\kappa]$.

\end{definition}

Formally, we will prove the following lemma in this section. 
\begin{lemma}[Approximate sequence lemma] \label{lem: sequence}
There exists an algorithm that outputs an approximation preserving sequence  $ U_0, U_1, \ldots, U_\iterations $  with probability at least $ \nicefrac{4}{5}$. The expected value of $\kappa$ is $O(\log \edgecount)$, and the expected number of \iscq{} queries made is $ O( \log^2 m)$.
\end{lemma}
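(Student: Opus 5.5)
The algorithm is the iterative procedure from the technical overview. Set $U_0=\vertexset$ and fix a constant threshold $\threshold \ge \thresholdconst$. In round $j$ I run \iscollision{} on $U_j$ with confidence parameter $\confidence_j$. If it outputs $1$, I stop and set $\iterations=j$. Otherwise I do three things:
\begin{itemize}
\item partition $U_j$ uniformly at random into $U_j^1$ and $U_j^2$;
\item draw $s_j$ random edges of $U_j$ via \iscq{}$(U_j)$;
\item let $U_{j+1}$ be a part whose empirical fraction is at least $1/16$, or an arbitrary part if neither qualifies.
\end{itemize}
I take $\confidence_j$ and the failure probability of the selection step to be of order $1/(j+10)^2$. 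This gives $s_j=O(\log j)$ and an \iscollision{} cost of $O(\sqrt{\threshold}\log j)=O(\log j)$, so round $j$ costs $O(\log j)$ queries. With this choice the total cost is $O(\iterations\log\iterations)$. That is already at most $O(\iterations^2)$, and a bound on $\E[\iterations^2]$ will suffice for the query complexity.

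\textbf{Correctness.} I union-bound three kinds of failure event over rounds.
\begin{itemize}
\item \emph{\iscollision{} errors.} These contribute $\sum_j\confidence_j$, which is made at most $1/20$ by choosing the constant appropriately. On the complement, \Cref{lemma: collision detection} ensures that whenever we continue, $m(U_j)>\threshold\ge\thresholdconst$, so the sparsification guarantee applies.
\item \emph{Sparsification failures.} By the sparsification bound, a round starting from $U_j$ fails with probability at most $1/m(U_j)$. When all rounds succeed, $m(U_j)\le (3/4)^{j}m$. Read backwards, the edge counts grow geometrically from the last round: the $\ell$-th round from the end has $m(U)\ge (32)^{\ell}\cdot\Omega(\threshold)$. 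Hence $\sum_j 1/m(U_j)\le \sum_{\ell\ge0} 32^{-\ell}/\threshold$, which is a small constant once $\threshold$ is a large enough constant. This summation over a random, adaptively chosen sequence is the main obstacle. I would handle it by conditioning round by round and bounding the probability of the first failure, using that before the first failure the lower bound $m(U_{j+1})\ge m(U_j)/32$ and the upper bound $3/4$ both hold.
\item \emph{Selection failures.} The Chernoff step errs with probability at most about $1/(j+10)^2$ per round, which sums to a small constant.
\end{itemize}
On the good event, the part with true fraction at least $1/8$ has empirical fraction at least $1/16$. Any part with empirical fraction at least $1/16$ has true fraction at least $1/32$, and both parts have true fraction at most $3/4$. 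So each $U_{j+1}$ is an approximation preserving subset of $U_j$, and the total failure probability is at most $1/5$.

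\textbf{Bounding $\iterations$.} On the good event, $m(U_j)\le(3/4)^j m$ while $m(U_j)\ge\threshold$ for every non-final round, so $\iterations\le \log_{4/3}(m/\threshold)+1=O(\log m)$. For the expectation, I must also control runs where a failure happens. In every round, regardless of failure, $m(U_{j+1})\le m(U_j)$. Moreover, once $m(U_j)<\threshold$, \iscollision{} outputs $1$ with probability at least $1-\confidence_j\ge 1/2$. So after any round the process either stops with constant probability or, with constant probability, shrinks the edge count by a factor $3/4$. A potential or drift argument with potential $\log m(U_j)$ then gives a geometric tail $\Pr[\iterations> C\log m + t]\le 2^{-\Omega(t)}$. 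This yields $\E[\iterations]=O(\log m)$ and $\E[\iterations^2]=O(\log^2 m)$, and therefore expected query count $O(\E[\iterations\log \iterations])=O(\log^2 m)$.
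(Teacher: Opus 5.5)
Your proposal is correct. The correctness part matches the paper's proof of Lemma~\ref{lemma: subsequence}: you union-bound \texttt{IsCollision} and selection errors over all rounds, and control $\sum_j 1/m(U_j)$ by reading the $3/4$-contraction backwards from the last round, where $m>\tau$. Your ``first failure'' framing makes that adaptive sum more rigorous than the paper does. Over the prefix of rounds preceding the first failure of any kind, $\sum_j 1/m(U_j)$ is deterministically $O(1/\tau)$, and the probability of a first sparsification failure is at most its expectation.

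Where you genuinely differ is the cost schedule and the bound on $\kappa$.
\begin{itemize}
\item \emph{Cost schedule.} The paper uses confidence $2^{-(j+5)}$ in round $j$, so round $j$ costs $O(j)$ and the total is $O(\kappa^2)$. Your $\Theta(1/(j+10)^2)$ keeps the union bound summable while round $j$ costs only $O(\log j)$. The construction therefore costs $O(\kappa\log\kappa)$, which is $O(\log m\log\log m)$ in expectation. That is slightly better than the lemma claims, though irrelevant for Theorem~\ref{Thm: Main Upper Bound}, where \mulestimation{} dominates.
\item \emph{Paper's bound on $\kappa$.} It uses only $\E[m(U')]\le m(U)/2$ from Lemma~\ref{Lemma: AddEst + Sparsifier}(i), iterated to $\E[m(U_t)]\le m/2^t$. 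Markov's inequality then gives $\Pr[t^*\ge t]\le m/(2^t\thresholdconst)$, and the vanishing error schedule makes the rounds after $t^*$ negligible.
\item \emph{Your bound on $\kappa$.} You split $\kappa$ into two parts. Rounds with $m(U_j)>\tau$ are dominated by a negative binomial, since each contracts by $3/4$ with constant probability. Rounds with $m(U_j)\le\tau$ are dominated by a geometric variable, since each stops with probability at least $1/2$.
\end{itemize}
Your route needs only constant per-round success below the threshold, but it takes a bit more bookkeeping. The paper's Markov step is shorter and never uses the high-probability contraction for the expectation.

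One slip to fix: reading backwards, the growth factor is $(4/3)^\ell$, not $32^\ell$. The bound $m(U_{\kappa-1-\ell})\ge (4/3)^\ell m(U_{\kappa-1})>(4/3)^\ell\tau$ comes from $m(U_{j+1})\le \frac34 m(U_j)$. The $1/32$ lower bound only gives $m(U_j)\le 32\,m(U_{j+1})$, which is an upper bound going backwards. The sum becomes $\sum_{\ell\ge 0}(3/4)^\ell/\tau=4/\tau$, so your conclusion is unaffected.
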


To prove the above lemma, we start with $U_0 = V$, and find $ U_i $ iteratively. Informally speaking, we stop with $U_\kappa$ when $m(U_\kappa)$ is \emph{small enough}. In \Cref{section: approx subset}, we discuss a procedure (\addestimation{} corresponding to \Cref{Lemma: AddEst + Sparsifier}) that generates a $U'$ which is an approximation preserving subset of $ U $, with specified probability, for any given $U$ as long as $m(U)$ is \emph{large}. In \Cref{section: approximate sequence}, we describe the procedure (\subsequence{} corresponding to \Cref{lemma: subsequence}) that uses \iscollision{} and \addestimation{} iteratively to generate an approximation preserving subsequence $ U_0, \ldots, U_\iterations $. Essentially, \iscollision{} is used to determine whether the number of edges in the current subset $U_i$ is small or large and \addestimation{} is used to generate a good subset as long as $m(U_i)$ is large.

\subsection{Procedure to find approximate preserving subset}\label{section: approx subset}
In this section, we describe the procedure \addestimation{} (see \Cref{alg: addestimation}) for finding the approximation preserving subset $ U' \subseteq U $, for any given $ U $ and prove its guarantee. 

\begin{algorithm}
    \caption{\addestimation($ U, \confidence $)}\label{alg: addestimation}
    \begin{algorithmic}[1]
        \Require \iscq{} query access to $G$ and a confidence parameter $\delta \in (0,1)$.
        \Ensure A subset $U' \subseteq U$.
        \State Uniformly and independently partition $ U $ into $ U^1 $ and $ U^2 $.\label{addestimation: line 1}
        \State Uniformly and independently sample $ O( \log ( \confidence^{-1} )) $ random edges from $ U $ and let $ \samplededges $ be the multi-set of sampled edges. \label{addestimation: line 2}
        \State Compute $ \approxfractionofedges ( U^i ) = \frac{ 1 }{ \size{ \samplededges } } \sum_{ \edge \in \samplededges} \indicator_{ \edge \in U^i } $ for $ i \in \{1, 2\} $. \label{addestimation: line 3}
        \If{There exists $ i \in \{1, 2\} $ such that $ \approxfractionofedges( U^i ) \geq \frac{ 1 }{ 16 } $} \label{addestimation: line 4}
            \State \Return $U^i$ as $U'$ \label{addestimation: line 5}
        \Else 
            \State \Return $U^1$ or $U^2$ arbitrarily as $U'$. \label{addestimation: line 7}
        \EndIf
    \end{algorithmic}
\end{algorithm}

Formally, we will prove the following.

\begin{lemma}[Approximate subset lemma] \label{Lemma: AddEst + Sparsifier}
Let $\thresholdconst$ be a suitably large constant. Consider \addestimation{} that takes $U \subseteq V$ as input and produces $U' \subseteq U$ as the output.
    \begin{itemize}
        \item[(i)] $\E[\edgecount(U')] \leq \edgecount(U)/2$;
        \item[(ii)] 

        If $m(U) \geq \thresholdconst$,  with probability $1-\delta-1/m(U)$, $U'$ is an approximation preserving subset of $U$.
    \end{itemize}
\end{lemma}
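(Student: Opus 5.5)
For part (i) I will use only the randomness of the partition. Each vertex of $U$ goes independently and uniformly into $U^1$ or $U^2$, so an edge $\{u,v\}$ of $\edgeset(U)$ lands inside $U^1$ with probability $1/4$, and likewise inside $U^2$. The output $U'$ is one of $U^1,U^2$, so $\edgecount(U') \le \edgecount(U^1)+\edgecount(U^2)$. Linearity of expectation then gives $\E[\edgecount(U')] \le \E[\edgecount(U^1)]+\E[\edgecount(U^2)] = \edgecount(U)/2$. This holds whatever the selection rule is, including its dependence on the sampled edges.

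For part (ii) I split the argument into a partition event and a sampling event, and take a union bound at the end. Write $m=\edgecount(U)$ and $Z=\edgecount(U^1)+\edgecount(U^2)$.

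\emph{Partition event.} I want, with probability at least $1-1/m$, that $Z\ge m/4$ and $\edgecount(U^i)\le 3m/4$ for both $i$. The first condition implies $\max_i \edgecount(U^i)/m \ge 1/8$. I will obtain both conditions from the sparsification bound quoted in the technical overview, in the style of \citet{harpeled18}: with probability at least $1-1/m$,
\[
\Bigl|\tfrac{m}{2}-Z\Bigr|\le c\sqrt{m}\log m .
\]
Since $\thresholdconst$ is large, $c\sqrt m\log m\le m/4$, which gives $Z\ge m/4$. The upper bound $\edgecount(U^i)\le 3m/4$ needs a separate argument, because one part could in principle absorb almost all of $Z$. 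For this I plan to apply the same concentration to each $\edgecount(U^i)$ separately, with mean $m/4$: $\edgecount(U^i)$ is a function of the independent vertex labels, and I will bound its variance by $O(\sum_v d_v^2)$, or handle high-degree vertices as Har-Peled et al.\ do, to get $\edgecount(U^i)\le m/4+O(\sqrt m\log m)\le 3m/4$. Proving this per-part concentration, in particular controlling vertices of degree close to $m$ such as stars, is the main obstacle. If Chebyshev is too weak to give failure probability $1/m$, I will instead use the degree-bucketing argument with Bernstein or Azuma bounds over the vertex exposure martingale, whose differences are bounded by degrees. One caveat: for a star $\edgecount(U^i)$ is not concentrated at all. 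However, for a star $\edgecount(U)$ is $0$ in each part, so the relevant quantity is $\edgecount(U^i)\le 3m/4$, which holds trivially there. I would therefore separately use the identity $\edgecount(U^1)+\edgecount(U^2)\le m - (\text{cut edges})$ and show that the number of cut edges is at least $m/4$ with high probability. If the paper's stated sparsification lemma already covers both conditions, as the overview claims, I will simply invoke it.

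\emph{Sampling event.} Condition on the partition event and let $p_i=\edgecount(U^i)/m$, where some $p_{i^*}\ge 1/8$. With $|\samplededges|=C\log(\confidence^{-1})$ samples drawn via \iscq{}$(U)$, Chernoff bounds and a union bound over $i\in\{1,2\}$ give, with probability $1-\confidence$, two facts. First, every $i$ with $p_i<1/32$ has $\approxfractionofedges(U^i)<1/16$. Second, $\approxfractionofedges(U^{i^*})\ge 1/16$. Hence the \textbf{if}-branch fires and returns a part with $p_i\ge 1/32$. On the partition event that part also has $p_i\le 3/4$, so it is approximation preserving. The union bound gives total failure probability at most $\confidence+1/m$.
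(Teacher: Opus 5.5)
Your proposal is correct and follows the paper's proof: (i) comes from linearity over the random partition together with $m(U')\le m(U^1)+m(U^2)$, and (ii) comes from combining the sparsification event (probability $1-1/m(U)$) with the $1-\delta$ sampling event of the \addestimation{} guarantee. The per-part upper bound you flag as the main obstacle is not one, since the upper half of the same two-sided bound gives $m(U^i)\le m(U^1)+m(U^2)\le m/2+c\sqrt{m}\log m\le 3m/4$ by nonnegativity, exactly as the paper's corollary argues; you can therefore drop the per-part concentration detour (which does fail for stars, where the part containing the center has about $m/2$ edges, not $0$ as you wrote).
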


The construction of an approximation-preserving subset $ U' $ proceeds in two steps. First, we sparsify the graph by randomly partitioning the set $ U $ into $U^1$ and $U^2$ (see \Cref{lemma: graph sparsification for k = 2}). When $\edgecount( U )$ is large enough, this partition guarantees the existence of $U^i$ such that $\edgecount( U^i )$ is a constant fraction of $\edgecount( U )$ (\Cref{lemma: spasification k = 2}). In the second step, we identify such an approximation-preserving subset from $U^1$ and  $U^2$ (\Cref{lemma: addestimation}). Combining
these results yields the proof of \Cref{Lemma: AddEst + Sparsifier}.

We prove the following variant of the graph sparsification result of Beame et al.~\citep{harpeled18}.

\begin{restatable}[Graph Sparsification]{lemma}{graphsparsification}\label{lemma: graph sparsification for k = 2}
    For a given $ \confidence \in (0, 1)$, there exists an absolute constant $ \constant $ such that the following holds. Let $ \graph = ( \vertexset, \edgeset ) $ be a graph with $ \vertexcount $ vertices and $ \edgecount $ edges. Let $ U^1 $ and $ U^2 $ be a uniformly random partition of $ \vertexset $. Then,
    \begin{itemize}
        \item[(i)] $\E[ \edgecount(U^1) + \edgecount(U^2)] = \edgecount(U)/2$;
        
        \item[(ii)]    
        $\Pr\tbrac{ \abs{ \edgecount/2 - \sum_{ i = 1 }^{ 2 } \edgecount( U^i ) } \geq \constant \sqrt{ \edgecount } \log \fbrac{ \edgecount / \confidence} } \leq \confidence$.
    \end{itemize}
\end{restatable}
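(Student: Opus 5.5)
Part (i) follows from linearity of expectation. Each vertex is placed in $U^1$ or $U^2$ independently and uniformly. An edge $\{u,v\}$ survives in $U^1$ with probability $1/4$, survives in $U^2$ with probability $1/4$, and these events are disjoint, so it lands inside some part with probability $1/2$. Summing over the $\edgecount$ edges gives $\E[\edgecount(U^1)+\edgecount(U^2)] = \edgecount/2$.

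For part (ii), let $X=\sum_{i}\edgecount(U^i)$ and encode the partition by independent signs $\sigma_v\in\{\pm1\}$. An edge $uv$ is monochromatic exactly when $\sigma_u\sigma_v=1$, so
\[
X-\frac{\edgecount}{2}=\frac12\sum_{uv\in\edgeset}\sigma_u\sigma_v .
\]
This is a degree-two Rademacher chaos with zero diagonal, and its variance is exactly $\edgecount/4$, since distinct edges give orthogonal products $\sigma_u\sigma_v$. The task is to turn this variance into a tail bound that costs only a $\log(\edgecount/\confidence)$ factor. Chebyshev alone gives deviation $\sqrt{\edgecount/\confidence}$, which is too weak, so I need a genuine concentration argument.

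The route I would take first is the degree-splitting argument of \citet{harpeled18}, adapted so that the logarithm depends on $\edgecount$ and not on $\vertexcount$. Only the vertices incident to edges matter, and there are at most $2\edgecount$ of them; this is what removes the dependence on $\vertexcount$. Orient each edge from the endpoint that comes earlier in a fixed vertex ordering, and expose the signs one vertex at a time. The increment contributed when $v$ is revealed is
\[
D_v = \frac12\,\sigma_v\sum_{u<v,\ uv\in\edgeset}\sigma_u .
\]
This is a martingale difference sequence. Conditioned on the past, $D_v$ takes the values $\pm\frac12 S_v$ with equal probability, where $S_v$ is the partial sum of earlier signs over neighbours of $v$. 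Freedman's inequality, or Azuma with random ranges, then says that $\abs{X-\edgecount/2}\ge \lambda\sqrt{W}$ holds with probability at most about $e^{-\lambda^2/2}$, where $W=\sum_v S_v^2/4$ is the predictable quadratic variation.

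I therefore need a high-probability bound on $W$.
\begin{itemize}
\item Each $S_v$ is a sum of $d^-_v$ independent signs, where $d^-_v$ is the number of earlier neighbours of $v$.
\item By Hoeffding and a union bound over the at most $2\edgecount$ relevant vertices, with probability $1-\confidence/2$ every $v$ satisfies $S_v^2 \le 2 d^-_v\log(4\edgecount/\confidence)$.
\item Summing gives $W \le \frac12\,\edgecount\log(4\edgecount/\confidence)$, because $\sum_v d^-_v=\edgecount$.
\end{itemize}

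Plugging this into Freedman's inequality with $\lambda=\sqrt{2\log(2/\confidence)}$ yields deviation $O\bigl(\sqrt{\edgecount}\,\log(\edgecount/\confidence)\bigr)$ with total failure probability at most $\confidence$. This matches the claimed bound for a suitable absolute constant $\constant$.

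The main obstacle is making the martingale step rigorous. The range of $D_v$ is random, and the bound on it holds only on the good event, so I would use a stopped martingale that freezes at the first vertex violating the $S_v$ bound, and handle the bad event separately. If this becomes delicate, my fallback is the Hanson--Wright inequality or hypercontractivity for degree-two chaos: the tail $\exp(-c\,t/\sqrt{\edgecount})$ at threshold $t=\sqrt{\edgecount}\log(1/\confidence)$ already gives the claim directly.
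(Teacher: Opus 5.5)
Your proposal is correct and is essentially the paper's proof. The paper also uses the vertex-exposure Doob martingale for $\sum_i \edgecount(U^i)$. It bounds each increment with high probability by a Chernoff bound on the colours of the vertex's neighbours, plus a union bound over the $O(\edgecount)$ non-isolated vertices, which is what removes the dependence on $\vertexcount$. It then finishes with a bounded-differences inequality that tolerates a bad event $\mathcal{B}$ (the Chung--Lu lemma), which is exactly your stopped-martingale device. Your $\pm 1$ encoding gives the exact increment $\frac12\sigma_v S_v$ with back-degrees $d^-_v$, which is a bit cleaner than the paper's bound on $\abs{g(1)-g(2)}$. For the final step, use Azuma on the stopped martingale with the non-uniform ranges $c_v \propto \sqrt{d^-_v \log(\edgecount/\confidence)}$, or de la Pe\~na's inequality for conditionally symmetric increments. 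Freedman's inequality with a single uniform range bound degrades when some vertex has large back-degree, e.g.\ a star whose centre is exposed last.
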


We will need the following lemma to prove the above graph sparsification lemma.

\begin{lemma}\label{lemma: support for sparsification}
    Let $ A $ be a set of $ \ell $ elements. For every element $ t \in A $, one assigns a color independently and uniformly from $ \sbrac{1, 2} $. Let $ C^{(i)} $ be the number of vertices in $ A $ with color $i$, for any $ i \in \sbrac{ 1, 2 }$. Then for $ \confidence \in (0, 1) $, we have
    \begin{equation*}
            \Pr \tbrac{ \abs{ C^{(1)} - C^{(2)} } \geq \sqrt{ { 6\ell\log (4\confidence^{-1}) }  }} \leq \confidence \text{ and } \E\tbrac{ \abs{ C^{(1)} - C^{(2)} } } \leq \sqrt{ \ell }
        \end{equation*}
\end{lemma}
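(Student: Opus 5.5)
We prove the tail bound and the expectation bound separately, both via the representation of $C^{(1)} - C^{(2)}$ as a sum of independent signs. For each $t \in A$ let $X_t = +1$ if $t$ receives color $1$ and $X_t = -1$ otherwise. The $X_t$ are independent Rademacher variables and
\[
C^{(1)} - C^{(2)} = \sum_{t \in A} X_t =: S .
\]

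\textbf{Tail bound.} We apply Hoeffding's inequality to $S$. Each summand lies in $[-1,1]$, an interval of length $2$, so
\[
\Pr\tbrac{\abs{S} \geq \lambda} \leq 2\exp\fbrac{-\lambda^2/(2\ell)} .
\]
Take $\lambda = \sqrt{6\ell\log(4\confidence^{-1})}$. Then the right-hand side equals
\[
2\fbrac{\confidence/4}^{3} \leq \confidence ,
\]
which is the first claim. The constant $6$ is slack; the same argument goes through with any constant at least $2$. Equivalently, one can write $C^{(1)} \sim \mathrm{Bin}(\ell,1/2)$ and $\abs{S} = 2\abs{C^{(1)} - \ell/2}$, and use the additive Chernoff bound $\Pr\tbrac{\abs{C^{(1)} - \ell/2} \geq a} \leq 2e^{-2a^2/\ell}$ with $a = \lambda/2$.

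\textbf{Expectation bound.} We use Jensen's (or Cauchy--Schwarz) inequality:
\[
\E\abs{S} \leq \sqrt{\E[S^2]} .
\]
Expanding the square,
\[
\E[S^2] = \sum_t \E[X_t^2] + \sum_{t \neq t'} \E[X_t]\,\E[X_{t'}] = \ell ,
\]
since the cross terms vanish by independence and zero mean. Hence $\E\tbrac{\abs{C^{(1)} - C^{(2)}}} \leq \sqrt{\ell}$.

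\textbf{Obstacles.} There is no real difficulty. The only care needed is to keep the constant in the exponent straight, noting that Hoeffding with range length $2$ gives $\lambda^2/(2\ell)$, and to handle the degenerate case $\ell = 0$, where $S = 0$ and both claims hold trivially.
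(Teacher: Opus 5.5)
Your proof is correct. The expectation half is exactly the paper's argument: $\pm1$ variables, $\E[S^2]=\ell$ by independence, then Jensen. The tail half takes a different route.

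The paper applies the multiplicative Chernoff bound to $C^{(1)}$ and to $C^{(2)}$ separately, each around its mean $\ell/2$ with $\varepsilon=\sqrt{(6/\ell)\log(4\delta^{-1})}$. This gives failure probability $\delta/2$ for each. It then combines them via the triangle inequality $\abs{C^{(1)}-C^{(2)}}\le\abs{C^{(1)}-\ell/2}+\abs{C^{(2)}-\ell/2}$ and a union bound. You instead apply Hoeffding once to the Rademacher sum (equivalently, additive Chernoff to $C^{(1)}$ via $\abs{S}=2\abs{C^{(1)}-\ell/2}$).

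Your route is cleaner and stronger:
\begin{itemize}
\item The union bound is unnecessary, because $C^{(2)}=\ell-C^{(1)}$ makes the two events identical.
\item You do not need the condition $\varepsilon<1$ under which the lower-tail multiplicative bound is stated. The paper leaves implicit the case $6\log(4\delta^{-1})\ge\ell$, where the event is simply empty.
\item You bound the non-strict event $\abs{S}\ge\lambda$ directly, whereas the paper's computation only controls $\abs{S}>\lambda$.
\item You get failure probability $\delta^3/32$, which shows how loose the constant $6$ is.
\end{itemize}
The paper's route only has the convenience of reusing the multiplicative Chernoff bound it cites elsewhere.

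One correction to your closing remark: at $\ell=0$ the tail claim does not hold trivially. The threshold $\sqrt{6\ell\log(4\delta^{-1})}$ is then $0$, and $\Pr[\abs{S}\ge 0]=1>\delta$. So the statement as written (with $\ge$) is false there. It needs $\ell\ge 1$, or $>$ in place of $\ge$, which is the form the paper's proof actually delivers.
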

\begin{proof}
     For $ t \in A$, let $ X_t $ be the indicator random variable that is $ 1 $ if $ t $ is assigned color $ 1 $ and $ 0 $ otherwise. Then, $ C^{(1)} = \sum_{t \in A} X_t $ and $ \E \tbrac{ C^{(1)} } = \ell/2$. Using Chernoff's inequality (Lemma~\ref{lemma: Multiplicative Chernoff Bound}) with $ \approxerror = \sqrt{ (6 / \ell) \log (4 / \confidence) }$, we have
    \begin{equation}\label{ineq: color 1}
        \Pr \tbrac{ \abs{ C^{(1)} - \frac{ \ell }{ 2 } } > \frac{\approxerror \ell }{ 2 } }  \leq 2 \exp \fbrac{ \frac{ 6 \log (4\confidence^{-1})}{ \ell } \cdot \frac{ \ell }{ 6 } } \leq \frac{ \confidence }{ 2 }
    \end{equation}
    Similarly, we can have the same bound for color $2$.
    \begin{equation}\label{ineq: color 2}
        \Pr \tbrac{ \abs{ C^{(2)} - \frac{ \ell }{ 2 } } > \frac{\approxerror \ell }{ 2 } }  \leq \frac{ \confidence }{ 2 }
    \end{equation}
    Observe that $ \abs{C^{(1)} - C^{(2)}} \leq \abs{ C^{(1)} - (\approxerror\ell/2) } + \abs{ C^{(2)} - (\approxerror\ell/2)} $. Therefore
    \begin{equation*}
        \Pr \tbrac{ \abs{C^{(1)} - C^{(2)}} > \approxerror \ell } \leq \confidence 
    \end{equation*}

    For $ t \in A$, let $ Y_t $ be a random variable that is $ 1 $ if $ t $ is assigned the color $ 1 $ and $ -1 $ otherwise. Then $ C^{(1)} - C^{(2)} = \sum_{ t \in A} Y_t = Y $. We have that $ \Pr \tbrac{ Y_t = 1 } = \Pr \tbrac{ Y_t = 1 } = 1/2$, then
    \begin{equation*}
        \E[ Y_t ] = 0 \text{ and } \E[ Y_t^2] = 1
    \end{equation*}
    By the independence of $ Y_t $, we get
    \begin{equation*}
         \E[ Y^2 ] = \sum_{ t \in A } \E[ Y_t^2 ] + \sum_{ u \neq v} \E[ Y_u Y_v ] = \ell 
    \end{equation*}
    Finally, we have $ \E[ \abs{Y} ] \leq \sqrt{ \E [ \abs{ Y }^2 ] } = \sqrt{ \E [ { Y }^2 ] } \leq \sqrt{ \ell }$. 
\end{proof}

We now prove \Cref{lemma: graph sparsification for k = 2}.

\begin{proof}[Proof of Lemma~\ref{lemma: graph sparsification for k = 2}] Let each vertex $t \in \vertexset$ be associated with a random variable
$Y_t \in \sbrac{1,2}$ such that $\Pr\tbrac{Y_t = 1} = \nicefrac{1}{2}$. For $i \in \sbrac{1,2}$, define $U^i = \sbrac{ t \in \vertexset \mid Y_t = i }$ and
\begin{equation*}
    f\fbrac{Y_1, \ldots, Y_{\vertexcount}}  = \edgecount(U^1) + \edgecount(U^2).
\end{equation*}
A fixed edge is counted by $f$ if and only if both its endpoints are assigned the same label,
which occurs with probability $\nicefrac{1}{2}$.
Therefore, $\E[f] = \edgecount / 2$.

    Consider the Doob martingale $ X_0, X_1, \ldots, X_\vertexcount $, where $ X_t = \E[f(Y_1, \ldots, Y_\vertexcount)|Y_1, \ldots, Y_t]$. For a fix values of $ Y_1, \ldots, Y_{t-1}$, we have,
    \begin{align*}
        X_{t-1} = \frac{ g(1) + g(2) }{ 2 }, \text{ where } g(i) = \E[f(Y_1, \ldots, Y_\vertexcount)| (Y_1, \ldots, Y_{t-1}) \cap (Y_t = i)]       
    \end{align*}
    Since, $X_t$ is either $g(1)$ or $g(2)$, we have $ \abs{X_t - X_{t-1}} \leq \abs{g(1) - g(2)}$. Let $ N(t) $ be the set of neighbours of the vertex $ t $ and $ \deg(t) = \abs{ N(t) }$. Let $ N_{ < t} = N(t) \cap [t-1] $ and $ N_{ > t} = N(t) \cap \sbrac{(t + 1), \ldots, \vertexcount}$. Let $C_{<t}^{(i)}$ and $C_{>t}^{(i)}$ be the number of vertices in $ N_{ < t} $ and $ N_{ > t} $ respectively. So, 
    \begin{equation*}
        \Delta_t = \abs{g(1) - g(2)} = \abs{ C_{<t}^{(1)} + \E\tbrac{ C_{>t}^{(1)} \mid Y_t = 1} - C_{<t}^{(2)} - \E\tbrac{ C_{>t}^{(2)} \mid Y_t = 2} }
    \end{equation*}
    The above equation is true because any edge with vertices in $ [t-1] $ will have the same contribution to $ g(1) $ and $ g(2) $. Similarly, any edge which have one vertex in $ [t-1] $ and another in $ \sbrac{(t + 1), \ldots, \vertexcount} $ or both of its vertices in $ \sbrac{(t + 1), \ldots, \vertexcount} $ will have the same contribution.
    Using Lemma~\ref{lemma: support for sparsification}, we have with probability at least $ 1 - \confidence'$, 
    \begin{equation*}
        \Delta_t \leq \sqrt{ 6 { \deg(t) } \log \frac{ 4 }{ \confidence' } } + \sqrt{ \deg(t) } \leq  2\constant \sqrt{ \deg(t) \log \frac{ 1 }{ \confidence' } } = \constant_t
    \end{equation*}
    Let $ \mathcal{B} $ be the event that there exits a $ t $ such that $ \Delta_t > \constant_t $. Let $ \vertexcount_a $ be the set of active vertices (vertices with degree at least $ 1 $). Therefore, using union bound $ \Pr[\mathcal{B}] \leq 4 \vertexcount_a \confidence' \leq 4 \edgecount \confidence' $.

    Let $ S = \sum_{ t = 1}^\vertexcount \constant_t^2 = \sum_{ t = 1}^\vertexcount 4 \constant^2 \deg(t) \log( 4 / \confidence') = O( \edgecount \log( 1 / \confidence' ) ) $ and $ s = \constant \sqrt{\edgecount} \log( 1 / \confidence' ) $. Using \Cref{lemma: martingle inequality}, we have
    \begin{equation*}
        \Pr \tbrac{ \abs{ f - \frac{ \edgecount }{ 2 }} > s } \leq 2 \exp \fbrac{ - \frac{ s^2 }{ 2S } } + \Pr \tbrac{ \mathcal{ B } } \leq \confidence' + 4 \edgecount \confidence'
    \end{equation*}
    Replacing $ \confidence' $ with $ \confidence / \constant \edgecount $ for a sufficiently large constant, we obtain 
    \begin{equation*}
        \Pr\tbrac{ \abs{ \frac{ \edgecount }{ 2 } - \sum_{ i = 1 }^{ 2 } \edgecount( U^i ) } \geq \constant \sqrt{ \edgecount } \log \fbrac{\frac{ \edgecount }{ \confidence }} } \leq \confidence
    \end{equation*}
    which completes the proof.
\end{proof}

\begin{restatable}[]{corollary}{}\label{lemma: spasification k = 2}
For a subset $U \subseteq \vertexset$, suppose there exists a sufficiently large
absolute constant $\thresholdconst$ such that $\edgecount(U) \ge \thresholdconst$.
Let $U^1$ and $U^2$ be a uniformly random partition of $U$. Then, with probability at least $1 - \frac{1}{\edgecount(U)}$, the following hold.
\begin{enumerate}
    \item[(i)] $ \frac{ \edgecount( U^1 ) }{ \edgecount( U ) } \geq 1/8$ or $\frac{ \edgecount( U^2 ) }{ \edgecount( U )} \geq 1/8$;
    \item [(ii)] $\frac{ \edgecount( U^1 ) }{ \edgecount( U ) }  \leq 3/4$ and $\frac{ \edgecount( U^2 ) }{ \edgecount( U )} \leq 3/4$.  
\end{enumerate}

\end{restatable}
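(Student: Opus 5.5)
We apply \Cref{lemma: graph sparsification for k = 2} to the induced subgraph $\graph[U]$, which has $\edgecount(U)$ edges, with confidence parameter $\confidence = 1/\edgecount(U)$. Write $M = \edgecount(U)$. With probability at least $1 - 1/M$ we then have
\[
\abs{ \frac{M}{2} - \fbrac{\edgecount(U^1) + \edgecount(U^2)} } \le \constant \sqrt{M}\,\log(M^2) = 2\constant\sqrt{M}\log M .
\]
Both (i) and (ii) will follow deterministically from this event once $M \ge \thresholdconst$. The only work is choosing $\thresholdconst$ so that $2\constant\sqrt{M}\log M \le M/8$ for all $M \ge \thresholdconst$. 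This choice exists because $\sqrt{M}\log M = o(M)$, and the ratio $16\constant\log M/\sqrt{M}$ is eventually decreasing, so a single threshold works for all larger $M$.

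For (i): on the good event, $\edgecount(U^1) + \edgecount(U^2) \ge M/2 - M/8 = 3M/8$. Hence $\max_i \edgecount(U^i) \ge 3M/16 \ge M/8$.

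For (ii): on the same event, $\edgecount(U^1) + \edgecount(U^2) \le M/2 + M/8 = 5M/8 \le 3M/4$. Since both terms are nonnegative, each $\edgecount(U^i) \le 5M/8 \le 3M/4$.

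So the main step is applying the lemma to $\graph[U]$ rather than to $\graph$. The uniform partition of $U$ is exactly the uniform partition of the vertex set of $\graph[U]$, and vertices of $\vertexset\setminus U$ play no role, so the lemma applies verbatim. The only obstacle is the constant bookkeeping in the choice of $\thresholdconst$ above, which also depends on $\constant$ and the base of the logarithm.
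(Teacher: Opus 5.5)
Your proposal is correct and follows the paper's proof: apply \Cref{lemma: graph sparsification for k = 2} to $\graph[U]$ with $\confidence = 1/\edgecount(U)$, and choose $\thresholdconst$ so that the deviation term is a small constant fraction of $\edgecount(U)$. The paper uses tolerance $\edgecount(U)/4$ where you use $\edgecount(U)/8$, and either suffices; you also correctly account for the factor $\log(\edgecount(U)^2) = 2\log \edgecount(U)$ coming from $\log(\edgecount/\confidence)$, which the paper silently absorbs into the constant.
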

\begin{proof}
Taking $\thresholdconst$ to be sufficiently large, we  have $c \sqrt{m(U)} \log m(U) \leq \frac{m(U)}{4}$. So, by taking $\delta=1/m(U)$ in Lemma~\ref{lemma: graph sparsification for k = 2}, we


\[
\Pr\!\tbrac{
    \frac{\edgecount(U)}{4}
    \le
    \sum_{i=1}^{2} \edgecount(U^i)
    \le
    \frac{3\,\edgecount(U)}{4}
}
\ge
1 - \frac{1}{\edgecount(U)} \, .
\]


This implies that, with probability at least $1 - 1/m(U)$,
at least one of $ \edgecount( U^1 ) / \edgecount( U ) $ and $\edgecount( U^2 ) / \edgecount( U )$ is at least $1/8$,
and both $\edgecount( U^1 ) / \edgecount( U )$ and $\edgecount( U^2 ) / \edgecount( U )$ are at most $3/4$.
\end{proof}

To prove the guarantees of \addestimation{} described in \Cref{alg: addestimation}, we need \Cref{lemma: addestimation}.

\begin{restatable}[]{lemma}{addest}\label{lemma: addestimation}
   With probability at least $ 1 - \confidence$, the following statements hold for \addestimation{} 
    \begin{enumerate}
    \item[(i)] If  $ \frac{ \edgecount( U^i ) }{ \edgecount( U ) } \geq \frac{ 1 }{ 8 } $, then $ \approxfractionofedges( U^i ) \geq \frac{ 1 }{ 16 } $.%
    \item[(ii)] If  $ \frac{ \edgecount( U^i ) }{ \edgecount( U ) } < \frac{ 1 }{ 32 } $, then $ \approxfractionofedges( U^i ) < \frac{ 1 }{ 16 } $.
    \end{enumerate}
The number of $\iscq$ queries made is $ O\fbrac{ \log ( \confidence^{-1} ) } $.
\end{restatable}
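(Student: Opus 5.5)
Conditioned on the partition $U^1,U^2$ fixed in Line~\ref{addestimation: line 1}, each edge in $\samplededges$ is an independent uniform sample from $\edgeset(U)$, since each is the answer to $\iscq(U)$. If $\edgeset(U)=\emptyset$ the statements are vacuous, so we assume $\edgecount(U)\ge 1$. For fixed $i\in\{1,2\}$, the indicators $X_e=\indicator_{e\in U^i}$ for $e\in\samplededges$ are then i.i.d.\ Bernoulli with mean $p_i=\edgecount(U^i)/\edgecount(U)$. Moreover, $\approxfractionofedges(U^i)$ is their empirical mean over $\size{\samplededges}=s=C\log(\confidence^{-1})$ samples, where $C$ is a constant to be fixed. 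The query count is immediate: Line~\ref{addestimation: line 2} makes exactly $s$ \iscq{} queries, and nothing else queries the oracle. So the whole task is a concentration argument.

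For (i), suppose $p_i\ge 1/8$. Since $\Pr[\approxfractionofedges(U^i)<1/16]$ is monotone decreasing in $p_i$, we may take $p_i=1/8$ in the worst case. The event $\approxfractionofedges(U^i)<1/16$ means $\sum X_e<(1-1/2)\,s p_i$. The multiplicative Chernoff lower tail (Lemma~\ref{lemma: Multiplicative Chernoff Bound}) bounds its probability by $\exp(-s p_i/8)\le \exp(-s/64)$. Choosing $C$ large makes this at most $\confidence/4$.

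For (ii), suppose $p_i<1/32$. We must bound $\Pr[\approxfractionofedges(U^i)\ge 1/16]$. Here the threshold is at least twice the mean, so we use the upper-tail form with a fixed absolute deviation. The cleanest route is to dominate: couple with $p_i'=1/32$, which can only increase the upper tail, and apply Chernoff with deviation factor $1$. This gives $\Pr[\sum X_e\ge 2\cdot s/32]\le \exp(-s/96)$, again at most $\confidence/4$ for large $C$. Alternatively, Hoeffding with additive gap $1/32$ gives $\exp(-2s/1024)$, which also suffices.

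A union bound over the two indices $i\in\{1,2\}$ and the two statements gives total failure probability at most $\confidence$. The statements are conditional on the partition, but the bound holds for every partition, so it holds unconditionally.

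The only mildly delicate point is the small-mean case in (ii). A multiplicative Chernoff bound with relative deviation $(1/16)/p_i-1$ behaves badly as $p_i\to 0$ unless we first use monotonicity to reduce to $p_i=1/32$. That reduction is where I would be careful; everything else is a routine choice of the constant $C$ hidden in $O(\log(\confidence^{-1}))$.
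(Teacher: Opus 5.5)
Your proof is correct and is essentially the paper's argument. Conditioned on the partition (which you make explicit and the paper leaves implicit), you show that the empirical fraction $\approxfractionofedges(U^i)$ concentrates around $\edgecount(U^i)/\edgecount(U)$, then take a union bound. The one difference is that the paper applies a single two-sided additive Hoeffding bound with deviation $1/32$ for each $i\in\{1,2\}$. This gives (i) and (ii) at once, since $1/8-1/32>1/16$ and $1/32+1/32=1/16$, so the monotonicity/domination reduction you flag as delicate is not needed.
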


\begin{proof}
Consider a subset $ U^i \subseteq U $ for any $ i \in \{1,2\} $. 
For each sampled edge, define the indicator random variable
\[
X_j =
\begin{cases}
1 & \text{if the $ j^{\text{th}} $ sampled edge lies in $ U^i $,} \\
0 & \text{otherwise.}
\end{cases}
\]
Observe that $\E[X_j] = \frac{m(U^i)}{m(U)}$. 

Define the fraction of sampled edges that belong to $ U^i $ by
$
X = \frac{1}{\size{\samplededges}}
\sum_{j \in [\size{\samplededges}]} X_j.
$
where $ \samplededges $ is the multiset of sampled edges (Line~\ref{addestimation: line 2}) obtained via \iscq{} queries on $ U $. 
Note that $ X = \approxfractionofedges(U^i) $ and hence
\[
\E[\approxfractionofedges(U^i)]
= \frac{1}{\size{\samplededges}}
\sum_{j \in [\size{\samplededges}]}\E[X_j]
= \frac{\edgecount(U^i)}{\edgecount(U)}
= \frac{m(U^i)}{m(U)}.
\]

Applying Hoeffding's bound~(\Cref{lemma: hoeffding}), we get
\[
    \Pr\!\left( \left| \approxfractionofedges(U^i) -
    \frac{m(U^i)}{m(U)}
    \right|
    \ge \tfrac{1}{32}
    \right)
    \le
    \exp\!\left(
    -\frac{\size{S}}{2}\cdot (1/32)^2
    \right) \le \delta/2.
\]
The last inequation holds since $\size{\samplededges}=O(\log 1/\confidence)$. Using the union bound over $i\in \{1,2\}$, we get the desired result.
\end{proof}





We now give the proof of \Cref{Lemma: AddEst + Sparsifier}, using \Cref{lemma: graph sparsification for k = 2}, \Cref{lemma: spasification k = 2} and \Cref{lemma: addestimation}
\begin{proof}[Proof of \Cref{Lemma: AddEst + Sparsifier}]
In \addestimation{}, the set $U$ is partitioned uniformly at random into $U^1$ and $U^2$. 
From \Cref{lemma: graph sparsification for k = 2}(i),
\[
\E[\edgecount(U^1)+\edgecount(U^2)] = \edgecount(U)/2.
\]
Since $U'$ is either $U^1$ or $U^2$, it follows that
$\E[\edgecount(U')] \le \edgecount(U)/2$, proving part~(i).

\smallskip

We now prove part~(ii). Let $\mathcal{E}_{\text{apx}}$ denote the event that $U'$ is an approximation-preserving subset of $U$, that is, $ 1/32 \leq m(U')/m(U)\leq 3/4$.
Let $\mathcal{E}$ be the event that at least one of $\edgecount( U^1 ) / \edgecount( U )$ and
$\edgecount( U^2 ) / \edgecount( U )$ is at least $1/8$, and both $\edgecount( U^1 ) / \edgecount( U )$ and $\edgecount( U^2 ) / \edgecount( U )$ are at
most $3/4$. By \Cref{lemma: spasification k = 2},
\[
\Pr[\mathcal{E}] \ge 1 - \frac{1}{m(U)}.
\]

Let us condition on the event $\mathcal{E}$. It suffices to bound $\Pr[\mathcal{E}_{\text{apx}} \mid \mathcal{E}]$. 
 Then both $\edgecount( U^1 ) / \edgecount( U )$ and $\edgecount( U^2 ) / \edgecount( U )$ are at most $3/4$, and at least one of
$\edgecount( U^1 ) / \edgecount( U )$ and $\edgecount( U^2 ) / \edgecount( U )$ is at least $1/8$; w.l.o.g.\ assume $\edgecount( U^1 ) / \edgecount( U ) \ge 1/8$. If $\edgecount( U^2) / \edgecount( U ) \ge 1/32$, then since $U'\in\{U^1,U^2\}$ we immediately have $1/32 \le m(U')/m(U) \le 3/4$, that is,  $\Pr[\mathcal{E}_{\text{apx}}\mid\mathcal{E}]=1$. Otherwise $\edgecount( U^2) / \edgecount( U )<1/32$. By \Cref{lemma: addestimation}, with probability at least $1-\delta$, we obtain $\widehat{p}(U^1)\ge 1/16$ and $\widehat{p}(U^2)<1/16$, implying $U'=U^1$. Consequently $1/32 \le m(U')/m(U) \le 3/4$ holds with probability at least $1-\delta$ conditioned on $\mathcal{E}$. Thus in all cases $$\Pr[\mathcal{E}_{\text{apx}}\mid\mathcal{E}] \ge 1-\delta.$$

Therefore,
\[
    \Pr[\mathcal{E}_{\text{apx}}]
    \ge
    \Pr[\mathcal{E}] \cdot \Pr[\mathcal{E}_{\text{apx}}\mid\mathcal{E}]
    \ge
    1 - \frac{1}{m(U)} - \delta.
\]





\end{proof}

\subsection{Procedure to find approximate preserving sequence}\label{section: approximate sequence}
We now describe the algorithm \subsequence{} (see \Cref{alg: subsequence}), which constructs the sparsification sequence used by the main estimator.

\begin{algorithm}
    \caption{\subsequence{}($ \vertexset $)}\label{alg: subsequence}
    \begin{algorithmic}[1]
        \Require \iscq{} access to graph $G$.
        \Ensure A nested sequence $U_1, U_2, \cdots, U_\iterations$.
        \State Initialize $ U_0 \gets V $; $ \iscoll \gets 1 $; $ \iterations \gets 0 $  \label{subsequence: Line 1}
        \While{$ true $}\label{subsequence: Line 2}
            \State $\iscoll \gets \iscollision{}( U_\iterations, \thresholdconst, \frac{ 1 }{ 2^{ \iterations + 5 } } )$
            \If{$\iscoll$ equals $ 1 $}
                \State \textbf{break}
            \Else
                \State $ U_{ \iterations + 1 } \gets \addestimation( U_\iterations, \frac{ 1 }{ 2^{ \iterations + 5 } } ) $ \label{subsequence: Line 7}
                \State $ \iterations \gets \iterations + 1 $
            \EndIf
        \EndWhile
        \State \Return $U_1, U_2, \cdots, U_\iterations$
    \end{algorithmic}
\end{algorithm}



    
        
        


\begin{restatable}[Correctness of \subsequence{}]{lemma}{}\label{lemma: subsequence}
Consider \subsequence{} as described in \Cref{alg: subsequence}. With probability at least $ \nicefrac{4}{5}$, it outputs an approximation preserving sequence  $ U_1, U_2, \ldots, U_\iterations $ such that $\edgecount (U_{\kappa})\leq 8\thresholdconst$. Here $\thresholdconst$ is given large constant.  The number of $\iscq$ queries made is $ O( \kappa^2 )$.
\end{restatable}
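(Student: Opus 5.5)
We prove the lemma by a union bound over iterations, using the geometrically shrinking confidence parameters $\confidence_j = 2^{-(j+5)}$ that \subsequence{} passes to \iscollision{} and \addestimation{} in round $j$. Define a good event $\mathcal{G}$ under which every call to \iscollision{} and every call to \addestimation{} behaves as promised by \Cref{lemma: collision detection} and \Cref{Lemma: AddEst + Sparsifier}. On $\mathcal{G}$ we argue deterministically:

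\begin{enumerate}
\item \emph{Termination condition.} If the loop breaks at $U_\iterations$, then \iscollision{} returned $1$. By \Cref{lemma: collision detection}(ii) this is impossible when $\edgecount(U_\iterations)\ge 8\thresholdconst$, so $\edgecount(U_\iterations) < 8\thresholdconst$.
\item \emph{Validity of each step.} If the loop continues at $U_j$, then \iscollision{} returned $0$. By \Cref{lemma: collision detection}(i) this forces $\edgecount(U_j) > \thresholdconst$, so the hypothesis $\edgecount(U_j)\ge\thresholdconst$ of \Cref{Lemma: AddEst + Sparsifier}(ii) holds. Hence $U_{j+1}$ is an approximation preserving subset of $U_j$.
\end{enumerate}

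Chaining the second point over all $j<\iterations$ gives an approximation preserving sequence.

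For the failure probability, the \iscollision{} call in round $j$ fails with probability at most $2^{-(j+5)}$. The \addestimation{} call in round $j$ fails with probability at most $2^{-(j+5)} + 1/\edgecount(U_j)$. Summing the $\delta$-terms over all rounds gives at most $2\sum_{j\ge 0}2^{-(j+5)} = 1/8$. The delicate term is $\sum_j 1/\edgecount(U_j)$.

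This is where I expect the main obstacle. Conditioned on success so far, $\edgecount(U_{j+1}) \le \tfrac34\edgecount(U_j)$. Reading the sequence backwards from the last continuing round, whose edge count is at least $\thresholdconst$, the edge counts grow geometrically by a factor of at least $4/3$ per step. Therefore
\[
\sum_j \frac{1}{\edgecount(U_j)} \le \frac{1}{\thresholdconst}\sum_{i\ge0}\left(\tfrac34\right)^i = \frac{4}{\thresholdconst}.
\]
For $\thresholdconst$ a large enough constant this is at most $1/40$, giving total failure probability at most $1/8+1/40=3/20<1/5$.

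To make this rigorous despite the adaptivity, I would consider the first round at which any failure occurs. The probability that the first failure happens at round $j$ is at most that round's failure bound evaluated on a history that has been good up to round $j$. On such a history the geometric bound on $\edgecount(U_j)$ is valid, since every earlier step was approximation preserving and every earlier $U_i$ had more than $\thresholdconst$ edges. Summing these first-failure probabilities over $j$ yields the bound above.

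For the query count, round $j$ calls \iscollision{} with $\threshold=\thresholdconst$ constant and $\confidence=2^{-(j+5)}$, costing $O(\sqrt{\thresholdconst}\,(j+5))=O(j+1)$ queries by \Cref{lemma: collision detection}. It also calls \addestimation{}, costing $O(\log 2^{j+5})=O(j+1)$ queries by \Cref{lemma: addestimation}. Summing over rounds $j=0,\dots,\iterations$ gives $O(\iterations^2)$ queries, counting at least one for $\iterations=0$.
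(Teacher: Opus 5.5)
Your proof is correct and follows essentially the same route as the paper. Both use a union bound over rounds with the geometric confidences $2^{-(j+5)}$, bound $\sum_j 1/\edgecount(U_j)$ geometrically by reading the sequence backwards from the last continuing round via the $3/4$ shrinkage, and count queries as $\sum_j O(j+1)=O(\iterations^2)$. Your single first-failure union bound handles the adaptivity more cleanly than the paper's conditioning on $\mathcal{E}_{\text{col}}$ followed by induction, and your value $4/\thresholdconst$ for the geometric sum is correct, whereas the paper writes $2/\thresholdconst$.
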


\begin{proof}

\noindent\textbf{Correctness: }
Let us consider the event $\mathcal{E}_{\text{col}}$ that denotes
\begin{itemize}
    \item[(i)] There does not exist a $j$ such that the algorithm goes to the next iteration if $m(U_j) \leq \thresholdconst.$ 

    \item[(ii)] For any $j$ such that the algorithm goes to the next iteration if $m(U_j) \geq 8\thresholdconst.$ 
\end{itemize}

The probability of $\mathcal{E}_{\text{col}}$ is lower bounded by \iscollision{} succeeds in all iterations during the run of the algorithm. Using the union bound, we have
\[
\Pr[\mathcal{E}_{\text{col}}]
\ge 1 - \sum_{i = 1}^{\infty} \frac{1}{2^{i+5}}
\ge 1 - \frac{1}{2^5} \sum_{i = 1}^{\infty} \frac{1}{2^{i}}
\ge 1 - \frac{1}{16}
= \frac{15}{16}.
\]

Let us continue the argument under the conditional space that $\mathcal{E}_{\text{col}}$ has occurred.

We assume $m(U_0) = \edgecount$ is sufficiently large; hence, the algorithm does not terminate in the first iteration, given that $\mathcal{E}_{\mathrm{col}}$ has occurred. Let $U_0,\ldots, U_\kappa$ be the output of the algorithm, where $\kappa$ is a positive integer.
Conditioned on the event $\mathcal{E}_{\text{col}}$, we have $m(U_j) \geq \thresholdconst$ for every $j \leq \kappa - 1$. 

We prove the claim by induction on $\kappa$. For each $j \ge 0$, let $\mathcal{E}_j$ denote the event which is  that the algorithm executes iteration $j$ and either terminates with, $\edgecount(U_j) \leq 8\thresholdconst$, or it produces an approximation preserving subset $U_{j+1}$ such that $\edgecount(U_{j+1}) \leq \frac{3\edgecount(U_j)}{4}$.

Let $\mathcal{E}_{\le j}$ denote the event that $\mathcal{E}_i$ occurs for all $i \in [j]$, that is,
\[
\mathcal{E}_{\le j} \;=\; \bigcap_{i=1}^{j} \mathcal{E}_i .
\]
Equivalently, $\mathcal{E}_{\le j}$ is the event that the algorithm runs for $j$ iterations, such that
$U_0, \ldots, U_{j}$ is an approximation preserving sequence, and at iteration $j$ either it produces $U_{j+1}$, an approximation preserving subset of $U_{j}$ such that $\edgecount(U_{j+1}) \leq \frac{3\edgecount(U_j)}{4}$, or the algorithm terminates with
$\edgecount(U_j) \leq 8\thresholdconst$.

The induction hypothesis is that
\[
\Pr[\mathcal{E}_{\le j}] \;\ge\; 1 - \delta_{\leq j},
\]
where
\[
\delta_i
\;=\;
\frac{1}{\edgecount(U_i)}
+
\frac{1}{2^{i+5}}
,
\qquad
\text{and}
\qquad
\delta_{\le j}
\;=\;
\sum_{i=0}^{j} \delta_i .
\]

We first prove the base case. That is, $\Pr[\mathcal{E}_{0}]\geq 1-\delta_0$. Recall that we assume that $m(U_0) = m$ is sufficiently large. Moreover, we assume that event $\mathcal{E}_{\text{col}}$ has occurred and algorithm produced a sequence $U_0, \ldots,U_\iterations$ with $\iterations \geq 1$. So, $\mathcal{E}_0$ occurs if \addestimation{} reports an approximate preserving subset $U_{j+1}$. By  \Cref{Lemma: AddEst + Sparsifier}, this occurs with probability at least $1-\confidence_0$. 

For the inductive step, we mainly show that $\Pr[\mathcal{E}_j~|~\mathcal{E}_{\leq j-1}] \geq 1-\delta_j$ for $j\geq 1$. Let us argue that it is sufficient.
Note that
$\delta_{\le j} = \delta_{\le j-1} + \delta_j$ for $j\geq 1$ and 
\[
\Pr[\mathcal{E}_{\le j}]
=
\Pr[\mathcal{E}_{\le j-1} \cap \mathcal{E}_j]
=
\Pr[\mathcal{E}_{\le j-1}]\cdot
\Pr[\mathcal{E}_j \mid \mathcal{E}_{\le j-1}] .
\]

By the induction hypothesis, $
\Pr[\mathcal{E}_{\le j-1}] \ge 1 - \delta_{\le j-1}.$
We will show that
\begin{equation*}\label{eqn:ind}
    \Pr[\mathcal{E}_j \mid \mathcal{E}_{\le j-1}] \ge 1 - \delta_j .
\end{equation*}

Therefore,
\begin{align*}
    \Pr \left[ \mathcal{E}_{\le j} \right]
    &=
    \Pr \left[ \mathcal{E}_{\le j-1} \right] \cdot \Pr \left[ \mathcal{E}_j \mid \mathcal{E}_{\le j-1} \right] \\
    &\ge
    (1-\delta_{\le j-1})(1-\delta_j) \\
    &\ge
    1-\delta_{\le j-1}-\delta_j \\
    &=
    1-\delta_{\le j}.
\end{align*}


Consider any $j$ with $1\leq j \leq \kappa-1$. Under the conditional space $\mathcal{E}_{\text{col}}$ and along with the fact that we assume $m(U_0) = m$ is sufficiently large, $m(U_j) \geq \thresholdconst$ for each $j \leq \kappa-1$.

We consider two cases:  $m(U_j) \geq 8\thresholdconst$, and $\thresholdconst \leq m(U_j) \leq 8\thresholdconst$, and analyse them individually.

\paragraph{Case $1$: $(\edgecount(U_j) > 8\thresholdconst)$.}
Recall that we are assuming that the event $\mathcal{E}_{\text{col}}$ has occurred. So, 
 \subsequence{} does not terminate after executing \iscollision{} and execute \addestimation{}. Moreover it produces an approximate preserving subset $U_{j+1}$ with $\edgecount(U_{j+1}) \leq \frac{3\edgecount(U_j)}{4}$, that is event $\mathcal{E}_j$ occurs, when \addestimation{} reports an approximate preserving subset $U_{j+1}$ with $\edgecount(U_{j+1}) \leq \frac{3\edgecount(U_j)}{4}$. By  \Cref{Lemma: AddEst + Sparsifier} and Line 7 of \subsequence{}, this occurs with probability at least $1-\confidence_j$ where,

\[
\delta_j
=
\frac{1}{\edgecount(U_j)} + \frac{1}{2^{j+5}}
\]

 Hence, 
\[
\Pr[\mathcal{E}_j \mid \mathcal{E}_{\le j-1}] \ge 1-\delta_j^{(1)},
\]


\paragraph{Case $2$: $( \thresholdconst \leq \edgecount(U_j) \leq 8\thresholdconst )$.}
In this case, we show that
\[
\Pr\!\left[\mathcal{E}_j \mid \mathcal{E}_{\le j-1} \right] \ge 1-\delta_j
\]
regardless of whether \subsequence{} terminates after executing \iscollision{} or proceeds to execute \addestimation{}.

In the former case, the claim follows directly from the definition of $\mathcal{E}_j$ together with the fact that $\edgecount(U_j) \le 8\thresholdconst$. In the latter case, \subsequence{} produces an approximate preserving subset $U_{j+1}$ of $U_j$ provided that \addestimation{} returns an approximate preserving subset $U_{j+1}$. By \Cref{Lemma: AddEst + Sparsifier}, this occurs with probability at least $1-\delta_j$, where
\[
\delta_j
\le
\frac{1}{\edgecount(U_j)} + \frac{1}{2^{j+5}
}.
\]

 Hence,
\[
\Pr\!\left[ \mathcal{E}_j \mid \mathcal{E}_{\le j-1} \right]
\ge
1 - \delta_j .
\]

We are done with the proof of the inductive step. We summarize the above discussion as follows. $\Pr[\mathcal{E}_{\text{col}}]\geq 15/16$. When the event $\mathcal{E}_{\text{col}}$ occurs, $\Pr[\mathcal{E}_{j}~|~\mathcal{E}_{j-1}] \geq 1-\delta_j$ for all $j \geq 0$. Moreover, $\Pr[\mathcal{E}_{\leq j}] \geq 1-\delta_{\leq j}$ for all $j \geq 0$.  Recall that the algorithm outputs a sequence 
$U_0, \ldots, U_{\kappa-1}$. 
The events $\mathcal{E}_{\text{col}}$ and 
$\mathcal{E}_{\leq \iterations-1}$ occur with probability at least
\[
 \Pr[\mathcal{E}_{\text{col}}]
  - \confidence_{\leq \iterations-1}.
\]
Hence, the following facts hold with the same probability:

\begin{itemize}
    \item[1.] $U_0, \ldots, U_\kappa$ is an approximate preserving sequence, that is, $$ \frac{1}{32} \leq  \frac{m(U_{i})}{m(U_{i-1})} \leq \frac{3}{4},~\mbox{where}~i \in [\kappa].$$
    \item [2.] $\edgecount(U_j) \geq \thresholdconst$ for each $j \leq \iterations - 1$, 
and $\edgecount(U_\iterations) \leq 8\thresholdconst$.

\end{itemize}
Note that \subsequence{} produces an approximation-preserving sequence 
$U_1, U_2, \cdots, U_\iterations$ such that 
$\edgecount(U_{\kappa}) \leq 8\thresholdconst$ whenever both of the above 
conditions hold. This occurs with probability at least $1/5$, provided 
$\delta_{\iterations-1} \leq 1/10$, since 
$\Pr[\mathcal{E}_{\text{col}}] \geq 15/16$.

We complete the proof of correctness of \subsequence{} by showing that $\delta_{\iterations-1} \leq 1/10$.

\begin{align*}
    \confidence_{\leq \iterations - 1} 
    = \sum_{i=0}^{\iterations - 1} \frac{1}{\edgecount(U_j)} 
       + \frac{1}{2^{j+5}}
    \leq \sum_{i=0}^{\infty} \frac{1}{2^{j+4}} 
       + \sum_{i=0}^{\iterations - 1} \frac{1}{\edgecount(U_j)}
    \leq \frac{1}{16} 
       + \sum_{i=0}^{\iterations - 1} 
         \frac{3^{\iterations - 1 - i}}
              {4^{\iterations - 1 - i}\thresholdconst}.
\end{align*}

The last inequation holds since $\edgecount(U_{\iterations - 1}) \geq \thresholdconst$ and 
$\edgecount(U_i) \leq \frac{3\edgecount(U_{i-1})}{4}$ for all 
$i \in [\iterations - 1]$. We obtain the bound of $\confidence_{\leq \iterations - 1}$ as below:
\begin{align*}
    \confidence_{\leq \iterations - 1}
    = \frac{1}{16} 
       + \frac{1}{\thresholdconst}
         \sum_{i=0}^{\iterations - 1} 
         \frac{3^{\iterations - 1 - i}}
              {4^{\iterations - 1 - i}}
    = \frac{1}{16} 
       + \frac{1}{\thresholdconst}
         \sum_{i=0}^{\iterations - 1} 
         \frac{3^{i}}{4^{i}}
    \leq \frac{1}{16} + \frac{2}{\thresholdconst}\leq \frac{1}{10}.
\end{align*}
 Here we have used the fact that $\thresholdconst$ is a large constant.

    \paragraph{Query Complexity:} From Lemma~\ref{lemma: collision detection} and~\ref{lemma: addestimation}, the number of \iscq{} queries made by \iscollision{} (for a constant threshold $ \threshold = \thresholdconst $) and \addestimation{} over $\iterations$ iterations is,
    \begin{align*}
        O\fbrac{\sum_{ j = 1 }^{\iterations} \log( 2^{ \iterations + 5 } ) } = O\fbrac{\iterations^2}
    \end{align*}
\end{proof}

Note that the number of iterations and the query complexity of  \subsequence{} (as mentioned in the above lemma) are random variables. We will argue their expectation in the following lemma. Moreover, 
when we discuss the full algorithm in \Cref{section: full algorithm} (that uses \subsequence{} as a subroutine), we show that the query complexity of the full algorithm is $O(\kappa^2 \cdot \log \kappa)$, where $\kappa$ is the number of iterations of \subsequence{}. The following lemmas will be useful in establishing the final query complexity in \Cref{section: full algorithm}.

\begin{lemma}\label{lemma: kappa}
$\E[\iterations]=O(\log \edgecount)$, $\E[\iterations^2]=O(\log^2 \edgecount)$, and $\E[\iterations^2 \log \iterations]=O(\log^2 \edgecount \cdot \log \log \edgecount)$.
\end{lemma}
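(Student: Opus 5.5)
We bound the tail of $\iterations$ using the expectation halving in \Cref{Lemma: AddEst + Sparsifier}(i), together with the exponentially shrinking failure probabilities of \iscollision{} in \subsequence{}.

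\textbf{Step 1: expected edge mass decays geometrically.} Write $M_j = \edgecount(U_j)$ if the algorithm is still running at iteration $j$, and $M_j=0$ otherwise. By \Cref{Lemma: AddEst + Sparsifier}(i), applied conditionally on the history, $\E[M_{j+1}\mid U_j] \le M_j/2$. Iterating gives $\E[M_j] \le \edgecount/2^j$. This bound holds unconditionally, since the partition step halves the expectation regardless of whether the sampling step succeeded.

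\textbf{Step 2: continuing past iteration $j$ has small probability.} The event $\iterations > j$ means the while loop did not break at iteration $j$. We split this into two cases.
\begin{itemize}
\item If $\edgecount(U_j) \ge 1$, Markov's inequality gives probability at most $\Pr[M_j \ge 1] \le \edgecount/2^j$.
\item If $\edgecount(U_j) = 0 \le \thresholdconst$, then by \Cref{lemma: collision detection}(i) \iscollision{} outputs $1$ except with probability $2^{-(j+5)}$.
\end{itemize}
Hence
\[
\Pr[\iterations > j] \le \min\{1, \edgecount 2^{-j}\} + 2^{-(j+5)}.
\]
One caveat: once $U_j$ has no edges, \iscq{} returns $0$, and \iscollision{} should output $1$ deterministically. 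I would check in the appendix that an empty induced subgraph is handled as ``small'', or otherwise rely on the $2^{-(j+5)}$ bound. This is the one place where the definition of \iscollision{} matters.

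\textbf{Step 3: sum the tail.} Let $L = \lceil \log_2 \edgecount \rceil$. For $j \ge L$ the tail is at most $2\cdot 2^{-(j-L)}$, a geometric tail. Therefore
\[
\E[\iterations] = \sum_{j\ge0}\Pr[\iterations>j] \le L + O(1) = O(\log\edgecount).
\]
For the second moment, use $\E[\iterations^2] \le \sum_j (2j+1)\Pr[\iterations>j]$. This gives $O(L^2) + \sum_{i\ge0}(2(L+i)+1)\,2^{1-i} = O(\log^2\edgecount)$.

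For $\E[\iterations^2\log\iterations]$, use $\Pr[\iterations = k] \le \Pr[\iterations > k-1]$. Split the sum at $k = 2L$:
\begin{itemize}
\item Terms with $k \le 2L$ contribute at most $\E[\iterations^2]\cdot\log(2L) = O(\log^2\edgecount\,\log\log\edgecount)$.
\item Terms with $k > 2L$ contribute at most $\sum_{k>2L} k^2\log k \cdot 2^{2-(k-L)}$, which is $O(1)$ since $2^{-(k-L)} \le 2^{-k/2}$ there.
\end{itemize}

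\textbf{Main obstacle.} The main obstacle is making Step 1 rigorous. The partitions are adaptive, and the input to each call of \addestimation{} depends on the earlier randomness. I would handle this by defining $M_j$ with the stopped convention, which makes it a nonnegative supermartingale with ratio $1/2$. The halving comes from the fresh uniform partition at each step, independent of the past, so the tower property applies.
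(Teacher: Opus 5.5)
Your proof is correct, and it organizes the same ingredients differently from the paper: the halving $\E[m(U')]\le m(U)/2$ from the partition step, Markov's inequality, and the $2^{-(j+5)}$ failure probability of \iscollision{} on small subsets.

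The paper conditions on a hitting time. It introduces $t^*$, the first time $m(U_t)$ drops to at most $\zeta_\mathrm{sp}$, and bounds $\E[\kappa^2\log\kappa\mid t^*]=O(t^{*2}\log t^*)$ using the failure probabilities of \iscollision{} after $t^*$. It then bounds $\Pr[t^*\ge t]\le m/(2^t\zeta_\mathrm{sp})$ by Markov and splits the resulting sum at $2\log m$. The run in which $m(U_t)>\zeta_\mathrm{sp}$ throughout is treated as a separate case.

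You instead derive one unconditional tail bound, $\Pr[\kappa>j]\le\min\{1,m2^{-j}\}+2^{-(j+5)}$. It comes from the union of two events: ``an edge survives to $U_j$'' and ``$U_j$ is edgeless but \iscollision{} errs''. All three moments then follow by tail summation.

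Your route buys some rigor:
\begin{itemize}
\item It needs no conditioning on a stopping time.
\item It does not use the paper's implicit claim $t^*\le\kappa$. That claim can fail, harmlessly, when \iscollision{} breaks while $\zeta_\mathrm{sp}<m(U_j)<8\zeta_\mathrm{sp}$.
\item It replaces the loose step $\zeta_\mathrm{sp}<\E[m(U_\kappa)]\le m/2^\kappa$, which treats the random $\kappa$ as fixed.
\item The stopped supermartingale makes the adaptivity explicit.
\item As a bonus, it gives an exponential tail for $\kappa$ beyond $\log_2 m$.
\end{itemize}

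Using threshold $1$ instead of $\zeta_\mathrm{sp}$ costs only a constant. Your reliance on the \iscollision{} guarantee for edgeless $U_j$ is the same reliance the paper already has after $t^*$.
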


From \Cref{lemma: subsequence} and \Cref{lemma: kappa}, \Cref{lem: sequence} follows. In the rest of the section, we prove \Cref{lemma: subsequence} and \Cref{lem: sequence}.

\begin{proof} Here, we prove that 
$$\E\tbrac{ \iterations^2 \log \iterations} = O( \log^2 \edgecount \log \log \edgecount ) $$ The remaining bounds follow similarly.

    We want to analyse $\E[\iterations^2 \log \iterations]$ where the expectation is over the distribution of $\iterations$ for a run of the algorithm. To analyse this, we consider two different cases. For the first case, let $\trueiter$ denote the time when the algorithm generates a subgraph with less than or equal to $\thresholdconst$ edges. Formally, we define $\trueiter$ as the time $t$ such that $\edgecount(U_{t}) \leq \thresholdconst$, and $\edgecount(U_{t-1}) > \thresholdconst$. If this is not the case, i.e. when $m(U_t) > \thresholdconst$ holds throughout the algorithm, we show that $ \iterations = O( \log \edgecount )$.

    First, we bound the expectation of $\edgecount(U_t)$ for some $t \in [\iterations]$. By \Cref{Lemma: AddEst + Sparsifier}, we have
    \begin{align*}
        \E\left[\edgecount(U_t)~|~m(U_{t-1})\right] \leq \frac{\edgecount(U_{t-1})}{2}.
    \end{align*}
    By a inductive argument, and the fact that $\edgecount(U_0) = \edgecount$, we have
    \begin{align}
        \E[\edgecount(U_t)]=\E[\E[\edgecount(U_{t})~\mid~\edgecount(U_{t-1})]] \leq \frac{\E[\edgecount(U_{t-1})]}{2} \leq \ldots \leq \frac{\edgecount}{2^t} \label{Eq: Edge bound at step t}
    \end{align}
    Now, when $m(U_j) > \thresholdconst$ holds throughout the algorithm, we have
    \begin{equation*}
        \thresholdconst < \E\tbrac{ \edgecount( U_\iterations ) } \leq \frac{ \edgecount}{ 2^{\iterations} } \implies \iterations = O ( \log \edgecount )
    \end{equation*}
    
    Hence, we now focus on the case when there exists a $\trueiter \leq \iterations$. Note that $\trueiter$ is unique for a run of the algorithm. Once, the algorithm reaches $\trueiter$, the probability of the algorithm terminating depends on the calls to \iscollision{} succeeding. We start with analysing the conditional expectation of $\iterations^2 \log \iterations$ given $\trueiter$.
    \begin{align}
        \nonumber\E \tbrac{ \iterations^2  \log \iterations \mid \trueiter } &= \sum_{ j = 1 }^\infty j^2 \log j \Pr \tbrac{ \iterations = j \mid \trueiter}\\\nonumber
        &= \sum_{ j = 1}^{\trueiter -1} j^2 \log j \Pr \tbrac{ \iterations = j \mid \trueiter} + \sum_{ j = \trueiter}^\infty j^2\log j \Pr \tbrac{ \iterations = j \mid \trueiter} \\\nonumber
        &= \sum_{ j = \trueiter}^\infty j^2 \log j \Pr \tbrac{ \iterations = j \mid \trueiter} &\text{(As $\trueiter \leq \iterations$)}\\\nonumber
        & \leq\sum_{ j = \trueiter}^\infty j^2 \log j \Pr \tbrac{ \iterations \geq j \mid \trueiter}\\\nonumber
        & \leq \sum_{ j = \trueiter}^\infty \frac{ j^2 \log j }{2^{j + 5}}\\
        & = \bigo{{\trueiter}^2 \log \trueiter}\label{Eq: Conditional iteration bound} 
    \end{align}

    Now, we analyse the distribution of $\trueiter$. We use~\eqref{Eq: Edge bound at step t} to obtain a bound on the probabilities of different values of $\trueiter$. 
    \begin{align}
        \nonumber\Pr \tbrac{ \trueiter = t } \leq \Pr \tbrac{ \trueiter \geq t } &\leq \Pr \tbrac{ \edgecount( U_{ t } ) > \thresholdconst }\\\nonumber
        &\leq \frac{ \E[ \edgecount( U_{ t } ) ] }{ \thresholdconst } &\text{Markov's Inequality (Lemma~\ref{Markov's Inequality})}\\
        &\leq \frac{ \edgecount }{2^t \thresholdconst  } &\text{\Cref{Eq: Edge bound at step t}}\label{Eq: Markov Edge bound}
    \end{align}
    Now, we bound the expectation of $\iterations^2 \log \iterations$.
    \begin{align*}
        \E[ \iterations^2 \log \iterations ] &= \E \tbrac{\E \tbrac{ \iterations^2 \log \iterations \mid \trueiter } }\\
        &= \E[ {\trueiter}^2 \log \trueiter ] &\text{(\Cref{Eq: Conditional iteration bound})}\\
        &= \sum_{j = 1}^\infty j^2 \log j \Pr[\trueiter = j]\\
        &= \sum_{j = 1}^{2\log \edgecount} j^2 \log j \Pr[\trueiter = j] + \sum_{j = 2\log\edgecount + 1}^\infty j^2 \log j \Pr[\trueiter = j]\\
        &\leq 8\log^2\edgecount \log \log\edgecount\sum_{j = 1}^{2\log \edgecount} \Pr[\trueiter = j] + \sum_{2\log\edgecount}^\infty \frac{\edgecount j^2 \log j}{2^j\thresholdconst} \\
        &= \bigo{\log^2 \edgecount \log \log \edgecount} 
    \end{align*}
    which proves the desired result.
\end{proof}

\Cref{lem: sequence} follows  from the algorithm \subsequence{} (\Cref{alg: subsequence}) along with \Cref{lemma: subsequence} and \Cref{lemma: kappa}.

\section{Full algorithm}\label{section: full algorithm}

In this section, we present the full algorithm for estimating the number of edges using \iscq{} queries and prove our main result stated below.

\MainUB*

\edgeestimationISC{} (\Cref{alg: edgeestimationISC}) begins by calling \subsequence{} which outputs an approximation preserving sequence $U_1 , \ldots, U_\iterations $, where each $ \fractionofedges_i = \edgecount( U_i ) / \edgecount( U_{i-1} ) $ is bounded below by a constant. This ensures that each ratio can be estimated efficiently. To estimate each $ \fractionofedges_i $, we use the subroutine \mulestimation{} which returns an $ (1 \pm \approxerror)$-approximation of $ \fractionofedges_i$ . The procedure \mulestimation{} is described in \Cref{alg: mulestimation} and its guarantee is established in \Cref{lemma: mulestimation}.

\begin{algorithm}[H]
    \caption{\mulestimation($ U', U, \approxerror, \confidence $)}\label{alg: mulestimation}
    \begin{algorithmic}[1]
        \Require $ U' \subseteq U$ and \iscq{} access to the subset $ U \subseteq \vertexset $.
        \Ensure $\approxfractionofedges ( U' )$
        \State Uniformly and Independently sample $ O\fbrac{ \log ( \confidence^{-1} ) / \approxerror^2 } $ random edges from $ U $ and let $ \samplededges $ be the multi-set of sampled edges.
        \State \Return $ \approxfractionofedges ( U' ) = \frac{ 1 }{ \size{ \samplededges } } \sum_{ \edge \in \samplededges} \indicator_{ \edge \in U' } $. 
    \end{algorithmic}
\end{algorithm}







\begin{algorithm}[H]
    \caption{\edgeestimationISC($ \vertexset, \approxerror $)}\label{alg: edgeestimationISC}
    \begin{algorithmic}[1]
        \Require \iscq{} access to any subset $ U \subseteq \vertexset $.
        \Ensure An estimate of $ \approxedgecount $
        \State $ U_1, U_2, \cdots, U_\iterations \gets \subsequence{}(\vertexset)$
        \State $ \approxedgecount_b \gets \birthdayestimate( U_\iterations, 8\thresholdconst, \frac{\approxerror }{ 3 } )$\label{edgeestimation: line 2}
        \State Initialize $ \approxfractionofedges_i \gets 1 $, for all $ i \in [\iterations]$
        \For{$ i = 1 $ to $ \iterations $}
            \State  $ \approxfractionofedges_i \gets \mulestimation( U_i, U_{i - 1}, \frac{ \approxerror }{ 200 \sqrt{\iterations} }, \frac{ 1 }{ 16 \iterations })$ \label{edgeestimation: line 5}
        \EndFor
        \State $ \approxedgecount \gets \approxedgecount_b / { \prod_{ i = 1 }^{ \iterations } \approxfractionofedges_i }$
        \State \Return $ \approxedgecount $
    \end{algorithmic}
\end{algorithm}


\begin{restatable}[Correctness of \mulestimation{}]{lemma}{mulest}\label{lemma: mulestimation}
    For subsets $ U' \subseteq U \subseteq \vertexset $ and $ \approxerror > 0$ , If $ \fractionofedges( U' ) = \frac{ \edgecount( U' ) }{ \edgecount( U ) } \geq \frac{ 1 }{ 32 } $, then \mulestimation{} outputs $ \approxfractionofedges ( U' ) = ( 1 + x ) \fractionofedges( U' ) $, where $\E[x] = 0 $ and $x \in (-\approxerror, \approxerror)$ with probability at least $ 1 - \confidence $. The number of \iscq{} queries made is $ O\fbrac{ \log ( \confidence^{-1} )  / \approxerror^2 } $.
\end{restatable}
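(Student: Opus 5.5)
The plan mirrors the proof of \Cref{lemma: addestimation}, now with a multiplicative rather than an additive target. Let $\samplededges$ be the multiset of $s = \size{\samplededges}$ edges drawn by \iscq{} queries on $U$. Each query on $U$ returns a uniform edge of $\edgeset(U)$, and $\edgeset(U) \neq \emptyset$ since $\fractionofedges(U') \ge 1/32 > 0$ forces $\edgecount(U) \geq 1$. For $j \in [s]$ let $X_j$ indicate that the $j$-th sampled edge lies in $U'$. The $X_j$ are i.i.d.\ Bernoulli with mean $\fractionofedges(U') = \edgecount(U')/\edgecount(U)$, because $\edgeset(U') \subseteq \edgeset(U)$. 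Then $\approxfractionofedges(U') = \frac{1}{s}\sum_j X_j$, and we define $x$ by $\approxfractionofedges(U') = (1+x)\fractionofedges(U')$, i.e.\ $x = \approxfractionofedges(U')/\fractionofedges(U') - 1$. This is well defined since $\fractionofedges(U') > 0$.

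First, unbiasedness. By linearity, $\E[\approxfractionofedges(U')] = \fractionofedges(U')$, hence $\E[x] = 0$. This needs no concentration and holds unconditionally.

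Second, concentration. Apply the multiplicative Chernoff bound (\Cref{lemma: Multiplicative Chernoff Bound}) to $\sum_j X_j$, whose mean is $\mu = s\,\fractionofedges(U') \ge s/32$:
\[
\Pr\tbrac{\abs{\textstyle\sum_j X_j - \mu} \ge \approxerror \mu} \le 2\exp\fbrac{-\approxerror^2 \mu/3} \le 2\exp\fbrac{-\approxerror^2 s/96}.
\]
Choosing $s = \lceil 96\,\approxerror^{-2}\ln(2/\confidence)\rceil = O(\log(\confidence^{-1})/\approxerror^2)$ makes the right-hand side at most $\confidence$. On the complementary event, $\abs{x} < \approxerror$, so $x \in (-\approxerror, \approxerror)$ with probability at least $1-\confidence$. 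We may assume $\approxerror < 1$, which the algorithm uses, so the Chernoff form applies; for $\approxerror \ge 1$ the same bound with $\min\{\approxerror,1\}$ still suffices.

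Third, the query count is exactly $s$ \iscq{} calls, one per sampled edge, which is $O(\log(\confidence^{-1})/\approxerror^2)$.

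The only real point of care is that the lower bound $\fractionofedges(U') \ge 1/32$ is what turns the additive Hoeffding-type control used in \Cref{lemma: addestimation} into a multiplicative one with only a constant-factor loss. One should use the multiplicative Chernoff form, with $\mu \geq s/32$, rather than Hoeffding, so that the error scales relative to $\fractionofedges(U')$. The $1/32$ is absorbed into the hidden constant of the sample size.
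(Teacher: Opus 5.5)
Your proposal is correct and follows essentially the same route as the paper's proof. Both use indicator variables for the sampled edges landing in $U'$, get unbiasedness by linearity, and apply the multiplicative Chernoff bound with $\mu \ge |S|/32$. This gives failure probability at most $2\exp(-\varepsilon^2 |S|/96) \le \delta$ once $|S| = O(\log(\delta^{-1})/\varepsilon^2)$.
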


\begin{proof}
    Let $ X_j$ be the indicator random variable such that,
\begin{equation*}
    X_j = 
    \begin{cases}
        1 & \text{If the $ j^{th} $ sampled edge is in $ U' $} \\
        0 & \text{Otherwise }
    \end{cases}
\end{equation*}
Define number of edges sampled that belong to $ U' $ is $ X  = \sum_{ j \in [\size{ \samplededges }] } X_j$, where $ \samplededges $ is the multi-set of sampled edges (Line~\ref{addestimation: line 2}) and the expected size of $ X $ is
\begin{equation*}
    \E\tbrac{ X } =  \sum_{ j \in [\size{S}] } \E\tbrac{ X_j } = \size{ S } \cdot \frac{ \edgecount{ (U') }}{ \edgecount{ (U) } } = \size{ S } \cdot \fractionofedges( U' ) \geq \frac{ \size{ \samplededges } }{ 32 }
\end{equation*}
Therefore,
\begin{align*}
    \E[x] = \E\tbrac{ \frac{\approxfractionofedges ( U' )}{\fractionofedges( U' )}} -1 = \frac{\E[X]}{\size{ S } \fractionofedges( U' )} -1 = 0
\end{align*}
By the Chernoff Bound (see Lemma~\ref{lemma: Multiplicative Chernoff Bound}), we have 
\[
    \Pr[x \in (-\approxerror, \approxerror)] = \Pr\tbrac{ \approxfractionofedges(U') \in (1\pm\approxerror)\fractionofedges(U') } = \Pr\tbrac{ X \in (1\pm\approxerror)\E[X] } \ge 1 - 2\exp\fbrac{ - \frac{\approxerror^2 \size{\samplededges}}{96} } \ge 1 - \confidence,
\]
where the last step uses $\size{\samplededges}=O\left(\log (\delta^{-1})/\varepsilon^2\right)$.
\end{proof}

Now, we prove the guarantees of \edgeestimationISC{}
In particular, we prove the correctness of \edgeestimationISC{} and its query complexity in \Cref{lem:final-correct} and \Cref{lem:final:query}, respectively.
\begin{lemma}\label{lem:final:query}
    The expected query complexity of \edgeestimationISC{} is $O(\log^2 m \cdot \log \log m)$.
\end{lemma}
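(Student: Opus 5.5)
The plan is to bound the queries of each phase of \edgeestimationISC{} as a deterministic function of the random number of iterations $\iterations$ produced by \subsequence{}. We then take expectations using \Cref{lemma: kappa}. The algorithm has three phases: the call to \subsequence{}, the single call to \birthdayestimate{} in Line~\ref{edgeestimation: line 2}, and the $\iterations$ calls to \mulestimation{} in Line~\ref{edgeestimation: line 5}. Summing the three phases conditioned on $\iterations$ and applying linearity of expectation gives the claim.

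\textbf{Phase 1 (\subsequence{}).} By \Cref{lemma: subsequence}, this phase uses $O(\iterations^2)$ queries. Iteration $j$ calls \iscollision{} with the constant threshold $\thresholdconst$ and failure parameter $2^{-(j+5)}$, which costs $O(j)$ queries by \Cref{lemma: collision detection}. It also calls \addestimation{} with the same parameter, which again costs $O(j)$ queries by \Cref{lemma: addestimation}. One point needs checking: this cost bound holds for every run of the algorithm, not only on the success event, because the number of samples in each call is fixed by its parameters. Hence $\E[\text{Phase 1}] = O(\E[\iterations^2]) = O(\log^2 \edgecount)$.

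\textbf{Phase 2 (\birthdayestimate{}).} The call \birthdayestimate{}$(U_\iterations, 8\thresholdconst, \approxerror/3)$ costs $\Theta(\approxerror^{-1}\sqrt{8\thresholdconst}) = O(\approxerror^{-1})$ queries by \Cref{lemma: birthdayestimate}, since $\thresholdconst$ is a constant. This cost does not depend on $\iterations$.

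\textbf{Phase 3 (\mulestimation{}).} Each call uses error $\approxerror/(200\sqrt{\iterations})$ and confidence $1/(16\iterations)$. By \Cref{lemma: mulestimation}, one call costs
\[
O\!\left(\frac{\iterations \log(16\iterations)}{\approxerror^{2}}\right)
\]
queries, and there are $\iterations$ calls. The total for this phase is therefore $O(\approxerror^{-2}\iterations^2\log \iterations)$; for $\iterations = 1$ we read this as $O(\approxerror^{-2})$, using $\log(16\iterations) = O(1+\log\iterations)$. This bound is also deterministic given $\iterations$, because the sample count in \mulestimation{} is fixed by its parameters, regardless of whether $U_i$ actually satisfies $\fractionofedges(U_i) \ge 1/32$.

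\textbf{Combining.} Taking expectations and using \Cref{lemma: kappa} gives $\E[\iterations^2\log\iterations] = O(\log^2\edgecount \log\log\edgecount)$. Adding the three phases yields a total of $O(\approxerror^{-2}\log^2\edgecount\log\log\edgecount)$. This matches the lemma statement with the $\approxerror$-dependence suppressed, which is consistent with \Cref{Thm: Main Upper Bound}.

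The main subtlety is that $\iterations$ is random and also determines the parameters of the later calls. We handle this by conditioning on $\iterations$ and using the fact that every per-call cost is a fixed function of $\iterations$. We do not need independence between $\iterations$ and the later sampling, since the later calls only affect the estimate, not $\iterations$. The heavier analytic work lies in \Cref{lemma: kappa}, which we take as given.
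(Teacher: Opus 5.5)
Your proposal is correct and follows essentially the same route as the paper: you bound \subsequence{} by $O(\kappa^2)$ queries, \birthdayestimate{} by $O(1/\varepsilon)$, and the $\kappa$ calls to \mulestimation{} by $O(\varepsilon^{-2}\kappa^2\log\kappa)$, then invoke Lemma~\ref{lemma: kappa} to bound $\E[\kappa^2\log\kappa]$. Your extra remarks are small points the paper leaves implicit: that each per-call cost is a fixed function of $\kappa$ on every run, and that $\log(16\kappa)=O(1+\log\kappa)$ handles the $\kappa=1$ case.
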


\begin{proof}
 From Lemma~\ref{lemma: mulestimation}, the number of \iscq{} queries made by \mulestimation{} in $j$-th iteration is
    \begin{align*}
        O\fbrac{ \frac{\log(16 \iterations )}{ \approxerror'^2 } } \text{where } \approxerror' = \frac{ \approxerror }{ 200 \sqrt{\iterations} }
    \end{align*}
    Summing over $ \iterations $ iterations, we get
    \begin{align*}
        O\fbrac{ \sum_{ j = 1 }^{ \iterations } \frac{ \log( \iterations ) }{ \approxerror'^2 } } = O\fbrac{\iterations^2 \frac{ \log \iterations }{ \approxerror^2 } }
    \end{align*}
    From Lemma~\ref{lemma: subsequence}, the number of \iscq{} queries made by \subsequence{} is $ O( \iterations^2 ) $. Finally, from Lemma~\ref{lemma: birthdayestimate}, the number of \iscq{} queries made by \birthdayestimate{} is $O\fbrac{ { \sqrt{8\thresholdconst} }/{ \approxerror}}= O(1/\approxerror)$. Combining all contributions, the total number of \iscq{} queries made by \edgeestimationISC{} is dominated by \mulestimation{}, and hence is, $ O( \iterations^2 \log \iterations / \approxerror^2)$. From \Cref{lemma: kappa}, 
    $$
        \E[\iterations^2 \log \iterations] = O(\log^2 \edgecount \log \log \edgecount)
    $$ 
    which bounds the expected query complexity.
\end{proof}
\begin{lemma}\label{lem:final-correct}
In \edgeestimationISC{}, the estimate $\widehat{m}$ is an
$(1 \pm \approxerror)$-approximation to $m$ with probability at least $3/4$.
\end{lemma}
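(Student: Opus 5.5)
The plan is to condition on three good events and then bound the multiplicative error of the estimator. Let $\mathcal{E}_{\mathrm{seq}}$ be the event from \Cref{lemma: subsequence}: \subsequence{} outputs an approximation preserving sequence $U_0,\ldots,U_\iterations$ with $\edgecount(U_\iterations)\le 8\thresholdconst$. It has probability at least $4/5$. On this event every ratio satisfies $\fractionofedges_i=\edgecount(U_i)/\edgecount(U_{i-1})\ge 1/32$, which is exactly the hypothesis of \Cref{lemma: mulestimation}. We also have $\edgecount(U_\iterations)\le 8\thresholdconst$, which is the hypothesis of \Cref{lemma: birthdayestimate} with threshold $8\thresholdconst$.

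Next, let $\birthdayevent$ be the event that \birthdayestimate{} returns $\approxedgecount_b=(1\pm\approxerror/3)\edgecount(U_\iterations)$. Conditioned on $\mathcal{E}_{\mathrm{seq}}$, and since \birthdayestimate{} uses fresh randomness once the sequence is fixed, this holds with probability at least $15/16$. Let $\subseqevent$ be the event that every \mulestimation{} call returns $\approxfractionofedges_i=(1+x_i)\fractionofedges_i$ with $|x_i|<\approxerror'=\approxerror/(200\sqrt{\iterations})$. Given $\iterations$ and the sequence, each call fails with probability at most $1/(16\iterations)$, so a union bound over the $\iterations$ calls gives failure probability at most $1/16$. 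The failure probabilities add up, and since $1-1/5-1/16-1/16\ge 3/4-$ fails by a hair ($0.675$), the rough union bound is not quite enough. I would tighten it in one of two ways. The first is to use the slack already visible in the proof of \Cref{lemma: subsequence}: the actual failure there is at most $1/16+1/10$ or less, so the success probability is closer to $0.84$. The second is to note that the constants in \mulestimation{} and \birthdayestimate{} can be taken so that each fails with probability at most $1/32$. Either way, all three events hold simultaneously with probability at least $3/4$. This bookkeeping is routine but needs care with constants.

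On the intersection of the three events, write
\[
\frac{\approxedgecount}{\edgecount}=\frac{\approxedgecount_b}{\edgecount(U_\iterations)}\cdot\prod_{i=1}^{\iterations}\frac{1}{1+x_i},
\]
using the telescoping identity $\edgecount=\edgecount(U_\iterations)/\prod_i\fractionofedges_i$. The key step, and the main obstacle, is controlling $\prod_i(1+x_i)$. The deterministic bound $|x_i|<\approxerror/(200\sqrt{\iterations})$ alone only gives $\prod_i(1+x_i)\in(1\pm\approxerror')^{\iterations}$, with total error of order $\approxerror\sqrt{\iterations}/200$. That is too weak when $\iterations$ is large, and this is why \Cref{lemma: mulestimation} records $\E[x_i]=0$.

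To use that, I would take logarithms: $\log\prod_i(1+x_i)=\sum_i x_i+O\fbrac{\sum_i x_i^2}$. The quadratic part is at most $\iterations\approxerror'^2=\approxerror^2/40000$, which is negligible. The linear part $\sum_i x_i$ is a sum of independent, conditionally mean-zero terms given the fixed sequence, each bounded by $\approxerror'$. By Hoeffding's bound (\Cref{lemma: hoeffding}), $|\sum_i x_i|\le \approxerror'\sqrt{2\iterations\log(2/\eta)}=O(\approxerror)$ with small constant, except with small probability $\eta$. For a cleaner argument I would use Chebyshev: the variance of $\sum_i x_i$ is at most $\iterations\approxerror'^2=\approxerror^2/40000$, so $|\sum_i x_i|\le\approxerror/20$ except with probability $1/100$. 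This extra failure is absorbed by the same constant-tightening as above.

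One subtlety: the $x_i$ are unconditionally mean zero, while I also condition on $|x_i|<\approxerror'$. I would therefore apply Chebyshev to the unconditioned variables, whose variance is $O(\fractionofedges_i^{-1}/|S|)=O(\approxerror'^2/\log\iterations)$, and only afterwards intersect with $\subseqevent$. With that, $\prod_i(1+x_i)^{-1}=1\pm\approxerror/10$. Combined with $\approxedgecount_b/\edgecount(U_\iterations)=1\pm\approxerror/3$, this gives $\approxedgecount=(1\pm\approxerror)\edgecount$ for $\approxerror<1/3$, which proves the lemma.
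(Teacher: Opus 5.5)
Your argument has the same skeleton as the paper's. You condition on \subsequence{} succeeding, on \birthdayestimate{} returning a $(1\pm\varepsilon/3)$-estimate of $m(U_\kappa)$, and on every \mulestimation{} call succeeding. You then concentrate $\sum_i x_i$ around $0$ and finish with the telescoping identity $m = m(U_\kappa)/\prod_i p_i$.

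Where you deviate, you are more careful than the paper. The paper applies Hoeffding to the $x_i$ \emph{conditioned} on all \mulestimation{} calls succeeding, and under that conditioning the $x_i$ need not remain mean zero. You instead apply Chebyshev to the $x_i$ given the sequence, where they are independent and mean zero with variance $(1-p_i)/(p_i|S|)=O(\varepsilon'^2/\log\kappa)$. Only afterwards do you intersect with the success event, which is the sound order. Likewise, your expansion $\log\prod_i(1+x_i)=\sum_i x_i\pm\sum_i x_i^2$ replaces the paper's product inequality. That inequality is justified only for $x_i\in(0,1)$, although the $x_i$ take both signs.

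The step that fails as written is the probability bookkeeping. You correctly see that the naive union bound gives only $0.675$. The paper's own computation $1-(21/80+1/16+e^{-12})$ also certifies only $2/3$, which is enough for \Cref{Thm: Main Upper Bound} but not the stated $3/4$.

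However, neither of your two fixes reaches $3/4$ on its own, contrary to your ``either way'' claim:
\begin{itemize}
\item Using the slack $15/16-1/10=0.8375$ from the proof of \Cref{lemma: subsequence}, together with the stated $1/16$ failures for \birthdayestimate{} and the \mulestimation{} union, gives $0.8375-1/8=0.7125$.
\item Using failure $1/32$ for each of those, together with the stated $4/5$ for \subsequence{}, gives $0.8-1/16=0.7375$.
\end{itemize}
Both are below $3/4$ even before subtracting the $1/100$ for your Chebyshev step.

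You need both tightenings together, giving $0.8375-1/16-1/100\approx 0.765$. Alternatively, use smaller failure probabilities, e.g.\ $1/100$ each for \birthdayestimate{}, the \mulestimation{} union and the Chebyshev step, obtained by enlarging the hidden constants in the sample sizes; this gives $0.8-0.03=0.77$. With that correction your argument proves the lemma as stated.
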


    \begin{proof} 
    Let $\mathcal{E}_1 = \subseqevent \cap \birthdayevent$, where $\subseqevent$ be the event that \subsequence{} outputs an approximation-preserving sequence $U_1, U_2, \ldots, U_\iterations$ with $ \edgecount( U_\iterations ) \leq 8 \thresholdconst$. From \Cref{lemma: subsequence} we have $\Pr[\subseqevent] \geq 4/5$.
    
    $ \birthdayevent $ be the event that \birthdayestimate{} outputs an $(1 \pm \approxerror/3)$ approximation of $ \edgecount( U_\iterations ) $ from \Cref{lemma: birthdayestimate}. Also, from \Cref{lemma: birthdayestimate}, we have $\Pr[\birthdayevent \mid \subseqevent] \geq 15/16$. Therefore,
    \begin{equation*}
        \Pr[\mathcal{E}_1] = \Pr[\subseqevent]\Pr[\birthdayevent \mid \subseqevent] \geq 1 - \fbrac{ \frac{1}{5} + \frac{1}{16} } \geq \frac{59}{80}
    \end{equation*}

    Let $\mathcal{E}_2$ be the event that \mulestimation$\left( U_i, U_{i - 1}, \varepsilon', { 1 }/{ 16 \iterations}\right)$ outputs  $\widehat{p_i}=(1 \pm x_i/4)p_i$ such that $\E[x_i]=0$ and $x_i \in \fbrac{ -4{\approxerror'}, 4\approxerror'}$ for all $i \in [\iterations]$, where $ \approxerror' = \frac{ \approxerror }{ 200 \sqrt{ \iterations }}$ . Therefore, using Lemma~\ref{lemma: mulestimation} and applying the union bound over all $i \in [\iterations]$, 
    \begin{equation*}
        \Pr[\mathcal{E}_2 \mid \mathcal{E}_1] = 1 - \sum_{ j = 1 }^{ \iterations } \frac{ 1 }{ 16 \iterations } \geq \frac{15}{16}.
    \end{equation*}
    %

    Let $\mathcal{E}_3$ be the event that $\abs{\sum_{i = 1 }^\iterations x_i} \leq \approxerror/9$. 
    By Hoeffding's bound (Lemma~\ref{lemma: hoeffding}),
\begin{align}
\Pr\!\left[
\mathcal{E}_3 \mid \mathcal{E}_1 \cap \mathcal{E}_2
\right]
&=
1-\Pr\!\left[
\left|
\sum_{i=1}^{\iterations} x_i
\right|
>
\frac{\approxerror}{9}
\,\middle|\,
\mathcal{E}_1 \cap \mathcal{E}_2
\right]
\nonumber \\
&\ge 1-
2\exp\!\left(
-
\frac{
2\approxerror^2
}{
81 \sum_{i=1}^{\iterations}
\left(
\frac{2\approxerror}{50\sqrt{\iterations}}
\right)^2
}
\right)
\ge 1-\frac{1}{e^{12}}. \label{ineq: hoeff}
\end{align}

   Hence,
     \begin{align*}
        \Pr[ \mathcal{E}_1 \cap \mathcal{E}_2 \cap \mathcal{E}_3] &= \Pr[\mathcal{E}_1] \cdot \Pr[\mathcal{E}_2 | \mathcal{E}_1] \cdot \Pr[\mathcal{E}_3 \mid \mathcal{E}_1 \cap \mathcal{E}_2] \geq 1 - \fbrac{ \frac{21}{80} + \frac{ 1 }{ 16 } + \frac{ 1 }{ e^{12} } } \geq \frac{2}{3}
    \end{align*}
In what follows, we assume that all of the events $\mathcal{E}_1$, $\mathcal{E}_2$ and $\mathcal{E}_3$ occur. Since, $\Pr[ \mathcal{E}_1 \cap \mathcal{E}_2 \cap \mathcal{E}_3] \geq 2/3$,  it is enough to argue that \edgeestimationISC{} returns an $(1 \pm \approxerror)$-approximation to $m$ when all $\mathcal{E}_1$, $\mathcal{E}_2$ and $\mathcal{E}_3$ holds.

    Combining the output of \mulestimation{} over the $ \iterations $ iterations, we get,
    \begin{align*} 
          \prod_{i = 1}^{ \iterations } \approxfractionofedges_i = \prod_{i = 1 }^{ \iterations } \fractionofedges_i \cdot \prod_{i = 1}^{ \iterations } \fbrac{ 1 + \frac{x_i}{4} }
    \end{align*}
 Using the fact that for each $x_i \in (0,1)$, it can be shown that 
    \begin{align*}
        1 - \frac{ \sum_{i = 1}^{ \iterations } x_i }{ 4 }  \leq \prod_{i = 1}^{ \iterations } \fbrac{ 1 + \frac{x_i}{4} } \leq 1 + \sum_{i = 1}^{ \iterations } x_i
    \end{align*}

\noindent Since event $\mathcal{E}_3$ occurs,
    \begin{align*}
        \prod_{i = 1 }^{ \iterations } \fractionofedges_i \cdot \fbrac{1 - \frac{\approxerror}{36}} \leq \prod_{i = 1}^{ \iterations } \approxfractionofedges_i \leq \prod_{i = 1 }^{ \iterations } \fractionofedges_i \cdot \fbrac{1 + \frac{\approxerror}{9}}
    \end{align*}
    Using the fact that for any $\approxerror \in (0, 1/3)$, we have $1 - 3\approxerror \leq (1 + \approxerror)^{-1}$ and $ (1 - \approxerror)^{-1} \leq 1 + 3\approxerror $. We get, 
    \begin{align}\label{ineq: xyz}
        \fbrac{1 - \frac{\approxerror}{3}} \cdot \frac{1}{\prod_{i = 1 }^{ \iterations } \fractionofedges_i}
         \leq \frac{1}{\prod_{i = 1}^{ \iterations } \approxfractionofedges_i} \leq \frac{1}{\prod_{i = 1 }^{ \iterations } \fractionofedges_i} \cdot \fbrac{1 + \frac{\approxerror}{12}}
    \end{align}

    Since $\mathcal{E}_1$ occurs, i.e., $\birthdayevent$ occurs, $ \approxedgecount_b$  satisfies the following:
    \begin{align} \label{ineq: birthdayest}
         \fbrac{ 1 - \frac{\approxerror}{3} } \cdot \edgecount( U_\iterations )  \leq \approxedgecount_b \leq \fbrac{ 1 + \frac{\approxerror}{3} } \cdot \edgecount( U_\iterations )
    \end{align}
    Combining~\eqref{ineq: xyz} and \eqref{ineq: birthdayest}, we get,
    \begin{align*}
        \fbrac{ 1 - \frac{\approxerror}{3} }^2 \cdot \frac{ \edgecount( U_\iterations ) }{ \prod_{i = 1 }^{ \iterations } \fractionofedges_i} 
        \leq 
        \widehat{m}=\frac{ \approxedgecount_b }{ \prod_{i = 1}^{ \iterations } \approxfractionofedges_i} 
        \leq
        \fbrac{ 1 + \frac{\approxerror}{12}}\fbrac{ 1 + \frac{\approxerror}{3}} \cdot \frac{ \edgecount( U_\iterations ) }{ \prod_{i = 1}^{ \iterations } \fractionofedges_i}
    \end{align*} 
    
    Since, $ \fractionofedges_i = \frac{ \edgecount( U_i) }{ \edgecount( U_{i - 1} )}$ for each $i \in [\kappa]$ and $ U_0 = \vertexset $, $ \frac{ \edgecount( U_\iterations ) }{ \prod_{i = 1 }^{ \iterations } \fractionofedges_i} = \edgecount( U_0 ) = \edgecount $. So, simplifying the above expression, we have 
    \begin{align*}
        \fbrac{ 1 - \approxerror } \edgecount \leq \approxedgecount \leq \fbrac{ 1 + \approxerror} \edgecount 
    \end{align*}
\end{proof}


\paragraph{Acknowledgments:}{We thank an anonymous reviewer for their suggestions in improving the query complexity from $\log^3 \edgecount$ (in an earlier version) to $\log^2 \edgecount$.
 Arijit Ghosh acknowledges partial support from the Science and Engineering Research Board (SERB), Government of India, through the MATRICS grant MTR/2023/001527, and from the Department of Science and Technology (DST), Government of India, through grant TPN-104427.}

\bibliographystyle{plainnat}
\bibliography{refs}

\appendix

\section{Concentration Inequalities}

We use the following concentration bounds.

\begin{lemma}[Markkov's inequlaity, see~\citep{Mitzenmacher_Upfal_2005}]\label{Markov's Inequality}
Let $X$ be a random variable that assumes only nonnegative values. Then, for all $ a > 0 $
\begin{equation*}
    \Pr \tbrac{ X \geq a} \leq \frac{ \E [ X ] }{ a }.
\end{equation*}    
\end{lemma}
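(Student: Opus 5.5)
This is a standard textbook fact, so I will give the usual one-line proof rather than invoke anything specific to the paper. Fix $a>0$ and look at the indicator random variable $\indicator_{X \geq a}$. Because $X$ takes only nonnegative values, we have the pointwise inequality
\[
a \cdot \indicator_{X \geq a} \leq X .
\]
To check it, split into two cases. On the event $\{X \geq a\}$ the left side equals $a$, and $a \leq X$ holds there by definition of the event. On the complementary event the left side is $0$, and $X \geq 0$ by nonnegativity.

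Now take expectations of both sides; expectation is monotone. The left side has expectation $a \Pr[X \geq a]$, so
\[
a \Pr[X \geq a] \leq \E[X].
\]
Dividing by $a>0$ gives exactly the stated bound $\Pr[X \geq a] \leq \E[X]/a$.

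There is essentially no obstacle. The one point needing care is the nonnegativity assumption, which is used only in the second case of the pointwise inequality. If $\E[X]=\infty$ the bound holds trivially, so I will note that explicitly and avoid any integrability concerns.
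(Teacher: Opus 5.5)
Your proof is correct and is the standard one: the pointwise bound $a\cdot\indicator_{X \geq a} \leq X$ followed by monotonicity of expectation and division by $a>0$. The paper gives no proof of its own and simply cites Mitzenmacher and Upfal, whose textbook proof is this same indicator argument, so there is nothing to add.
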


\begin{lemma}[Chebyshev’s inequality, see~\citep{Mitzenmacher_Upfal_2005}]\label{lemma: chebyshev}
    Let $ X $ be a random variable and $ a > 0 $, Then
    \begin{equation*}
        \Pr[ \abs{X - \E[X]} > a] \leq \frac{ \Var( X ) }{ a^2 }.
    \end{equation*}
\end{lemma}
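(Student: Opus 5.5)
The plan is to derive Chebyshev's inequality directly from Markov's inequality (Lemma~\ref{Markov's Inequality}) by applying it to the squared deviation instead of to $X$. Set $Y = (X - \E[X])^2$. This random variable is nonnegative, so Markov's inequality applies to it. Its expectation is exactly $\Var(X)$ by definition. If $\Var(X) = \infty$ the bound is trivial, so we may assume the variance is finite.

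The key step is to rewrite the event. Since $a > 0$ and squaring is strictly increasing on the nonnegative reals,
\[
\abs{X - \E[X]} > a \iff (X - \E[X])^2 > a^2 .
\]
Hence
\[
\Pr\tbrac{\abs{X - \E[X]} > a} = \Pr\tbrac{Y > a^2} \le \Pr\tbrac{Y \ge a^2}.
\]

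Applying Lemma~\ref{Markov's Inequality} to $Y$ with threshold $a^2 > 0$ then gives
\[
\Pr\tbrac{Y \ge a^2} \le \frac{\E[Y]}{a^2} = \frac{\Var(X)}{a^2},
\]
which is the claim.

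There is no real obstacle here. The only points to check are that $Y$ is nonnegative, so Markov applies; that $a^2 > 0$, which follows from $a > 0$; and that passing from the strict inequality in the statement to the non-strict one in Markov only weakens the bound in the correct direction. We implicitly assume $\E[X]$ exists, which is needed for the statement to make sense.
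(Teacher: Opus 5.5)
Your proof is correct. The paper gives no proof of its own and simply cites Mitzenmacher--Upfal, where Chebyshev's inequality is derived exactly as you do: Markov's inequality is applied to the nonnegative variable $(X-\E[X])^2$ with threshold $a^2$, and your handling of the strict-versus-non-strict inequality and of the case $\Var(X)=\infty$ is sound.
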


\begin{lemma}[Multiplicative Chernoff bound, see~\citep{Mitzenmacher_Upfal_2005}]\label{lemma: Multiplicative Chernoff Bound}
    Let $X_1,X_2,...,X_t$ be i.i.d. random variables
    where $\Pr[X_i = 1] = p$ and $\Pr[X_i = 0] = 1-p$ for all $i \in [t]$, and $X = \sum_{i \in [t]} X_i$. Then, we have
    \begin{align*}
    \Pr \left[ X \leq (1-\approxerror) \E\tbrac{X} \right] &\leq \exp{\fbrac{-\frac{\approxerror^2\E\tbrac{X}}{3}}} & 0 \leq \approxerror <1\\
    \Pr \left[ X \geq (1+\approxerror) \E\tbrac{X} \right] &\leq \exp{\fbrac{-\frac{\approxerror^2\E\tbrac{X}}{2}}} & 
    \approxerror \geq 0
    \end{align*}
\end{lemma}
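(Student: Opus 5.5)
The plan is the standard moment-generating-function (Chernoff) argument, done separately for the two tails. Write $\mu = \E[X] = tp$. One caveat: the upper-tail inequality as displayed does \emph{not} hold for all $\approxerror \ge 0$. The plan therefore proves the standard forms
\[
\Pr[X \le (1-\approxerror)\mu] \le e^{-\approxerror^2\mu/2}
\quad\text{and}\quad
\Pr[X \ge (1+\approxerror)\mu] \le e^{-\approxerror^2\mu/(2+\approxerror)},
\]
the latter being at most $e^{-\approxerror^2\mu/3}$ for $\approxerror \le 1$. The lower tail stated in the lemma, with constant $3$, follows immediately from the first.

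\textbf{Step 1 (MGF bound).} For any real $\lambda$, independence gives
\[
\E[e^{\lambda X}] = \prod_{i=1}^t \bigl(1 + p(e^\lambda - 1)\bigr) \le \exp\bigl(\mu(e^\lambda - 1)\bigr),
\]
using $1 + y \le e^y$.

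\textbf{Step 2 (upper tail).} For $\lambda > 0$, apply Markov's inequality (\Cref{Markov's Inequality}) to $e^{\lambda X}$:
\[
\Pr[X \ge (1+\approxerror)\mu] \le \exp\bigl(\mu(e^\lambda - 1) - \lambda(1+\approxerror)\mu\bigr).
\]
Choosing $\lambda = \ln(1+\approxerror)$ gives the bound $\exp\bigl(-\mu\,[(1+\approxerror)\ln(1+\approxerror) - \approxerror]\bigr)$. Then use the elementary inequality $\ln(1+\approxerror) \ge 2\approxerror/(2+\approxerror)$ for $\approxerror \ge 0$ to get $(1+\approxerror)\ln(1+\approxerror) - \approxerror \ge \approxerror^2/(2+\approxerror)$.

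\textbf{Step 3 (lower tail).} Similarly, apply Markov to $e^{-\lambda X}$ with $\lambda = -\ln(1-\approxerror)$. This yields $\exp\bigl(-\mu\,[\approxerror + (1-\approxerror)\ln(1-\approxerror)]\bigr)$. Taylor expansion gives $(1-\approxerror)\ln(1-\approxerror) \ge -\approxerror + \approxerror^2/2$ for $\approxerror \in [0,1)$, since all higher-order terms of the expansion are nonnegative. Hence the lower tail is at most $e^{-\approxerror^2\mu/2} \le e^{-\approxerror^2\mu/3}$.

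\textbf{Main obstacle.} The main issue is the upper-tail constant, not the mechanics. The exponent $(1+\approxerror)\ln(1+\approxerror) - \approxerror = \approxerror^2/2 - \approxerror^3/6 + \dots$ is strictly below $\approxerror^2/2$ for every $\approxerror > 0$. In the Poisson regime ($p \to 0$ with $\mu$ fixed) this exponent is essentially tight, so $e^{-\approxerror^2\mu/2}$ cannot be a valid upper-tail bound.

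I would therefore state the upper tail as $e^{-\approxerror^2\mu/3}$ for $\approxerror \in [0,1]$, or as $e^{-\approxerror^2\mu/(2+\approxerror)}$ in general. I would also check that its uses in the paper are unaffected up to constants. Those uses are \Cref{lemma: support for sparsification} and \Cref{lemma: mulestimation}, where the sample sizes are only specified up to $O(\cdot)$.
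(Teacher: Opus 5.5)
Your proposal is the standard moment-generating-function argument from Chapter~4 of Mitzenmacher--Upfal, which is exactly what the paper cites in place of a proof. Your diagnosis is also right: the statement swaps the textbook constants, so its lower tail holds (even with $2$ in place of $3$) while its upper tail is false. Already $t=1$, $p=0.1$, $\varepsilon=9$ gives $\Pr[X\ge 1]=0.1>e^{-4.05}$. That is a more direct witness than your Poisson heuristic, which for a small fixed $\varepsilon$ only produces a counterexample as $\mu\to\infty$, not at fixed $\mu$.

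Your check of the applications also goes through. \Cref{lemma: support for sparsification} and \Cref{lemma: mulestimation} already use the constant $3$ on both tails, with $\varepsilon<1$ as invoked; in the former, the event is empty when $\varepsilon>1$. There is a third use you did not list, in the proof of \Cref{lemma: collision detection}, but it is stated only up to unspecified constants. So the corrected form $e^{-\varepsilon^2\mu/3}$ for $\varepsilon\in[0,1]$ suffices throughout.
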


\begin{lemma}[Hoeffding's bound, see~\citep{Mitzenmacher_Upfal_2005}]\label{lemma: hoeffding}
    Let $\X_1, \X_2, \ldots, \X_t$ be independent random variables such that for all $i \in [t]$, $\E[\X] = \mu $ and $\Pr[a \leq \X_i \leq b] = 1$. Then
    \begin{equation*}
        \Pr \tbrac{ \abs{ \frac{ 1 }{ t } \sum_{i = 1}^t \X_i - \mu }  \geq \approxerror } \leq 2 \exp{ \fbrac{ - \frac{ 2t\approxerror^2 }{ (b - a)^2 } } }.
    \end{equation*}
\end{lemma}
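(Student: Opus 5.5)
We interpret the hypothesis as $\E[\X_i] = \mu$ for every $i \in [t]$ (the statement writes $\E[\X]$). The plan is the standard exponential-moment (Chernoff) method, with Hoeffding's lemma as the key ingredient. Set $Y_i = \X_i - \mu$. Then $\E[Y_i] = 0$ and $a - \mu \le Y_i \le b - \mu$ almost surely, so each $Y_i$ is a mean-zero variable supported on an interval of length $b-a$. By symmetry (replace $Y_i$ by $-Y_i$, which lies in an interval of the same length), it suffices to prove the one-sided bound
\[
\Pr\tbrac{\sum_{i=1}^t Y_i \ge t\approxerror} \le \exp\fbrac{-\frac{2t\approxerror^2}{(b-a)^2}}.
\]
A union bound over the two tails then gives the factor $2$.

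\textbf{Step 1 (Hoeffding's lemma).} I will show that a mean-zero $Y$ with $a' \le Y \le b'$ and $b' - a' = b - a$ satisfies
\[
\E[e^{\lambda Y}] \le \exp\fbrac{\lambda^2 (b-a)^2/8} \quad \text{for all } \lambda > 0.
\]
First, by convexity of $y \mapsto e^{\lambda y}$ on $[a',b']$,
\[
e^{\lambda y} \le \frac{b'-y}{b'-a'}\, e^{\lambda a'} + \frac{y-a'}{b'-a'}\, e^{\lambda b'}.
\]
Taking expectations and using $\E[Y]=0$ gives $\E[e^{\lambda Y}] \le (1-p)e^{\lambda a'} + p\,e^{\lambda b'}$, where $p = -a'/(b'-a') \in [0,1]$. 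Next, write $h = \lambda(b-a)$. The right-hand side equals $e^{L(h)}$ with
\[
L(h) = -hp + \log\fbrac{1 - p + p e^{h}}.
\]
One checks $L(0) = 0$ and $L'(0) = 0$. Moreover, $L''(h) = q(1-q)$ with $q = p e^h/(1-p+pe^h) \in [0,1]$, so $L''(h) \le 1/4$. Finally, Taylor's theorem with remainder gives $L(h) \le h^2/8$, which is exactly the claim.

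\textbf{Step 2 (Chernoff step).} For any $\lambda > 0$, apply Markov's inequality (Lemma~\ref{Markov's Inequality}) to $e^{\lambda \sum_i Y_i}$, and then use independence to factor the expectation:
\[
\Pr\tbrac{\sum_i Y_i \ge t\approxerror} \le e^{-\lambda t \approxerror} \prod_{i=1}^t \E[e^{\lambda Y_i}] \le \exp\fbrac{-\lambda t\approxerror + t\lambda^2 (b-a)^2/8}.
\]

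\textbf{Step 3 (optimise).} Choosing $\lambda = 4\approxerror/(b-a)^2$ makes the exponent equal to $-2t\approxerror^2/(b-a)^2$. Dividing the event by $t$ recovers the form $\abs{\frac1t\sum_i \X_i - \mu} \ge \approxerror$ in the statement, and combining with the symmetric lower tail completes the bound.

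The main obstacle is Step~1, specifically the uniform bound $L'' \le 1/4$. I would handle it by recognising $L''(h)$ as the variance of a Bernoulli$(q)$ variable, which is at most $1/4$. The remaining steps are routine calculation.
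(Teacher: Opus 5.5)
Your proof is correct. The convexity bound, the identity $L''(h)=q(1-q)\le 1/4$, the Chernoff step with $\lambda = 4\varepsilon/(b-a)^2$, and the union bound over the two tails are all sound, under the implicit assumptions $\varepsilon>0$ and $b>a$. The paper gives no proof of its own and only cites Mitzenmacher--Upfal, whose proof is this same argument via Hoeffding's lemma, so your route coincides with the intended one.
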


\begin{lemma}[See~\citep{martingle06}]\label{lemma: martingle inequality}
    Let f be any function of $t$ independent random variables $Y_1, \ldots, Y_t$ and let $X_i = \E[f(Y_1, \ldots, Y_t)|Y_1, \ldots, Y_i]$ for $i \in [t]$ and $ X_0 = \E[f(Y_1, \ldots, Y_t)]$. Let $ \mathcal{B} $ be the event that there exists $ i \in [t]$ such that $ \abs{X_i - X_{i-1}} > c_i $ where $c_1, \ldots, c_t$ are some non negative numbers. Let $S = \sum_{i = 1}^t c_i^2$, then
    \begin{equation*}
        \Pr \tbrac{\abs{X_t - X_0} > s } \leq 2\exp{ \fbrac{-\frac{s^2}{2S}}} + \Pr[ \mathcal{B}].
    \end{equation*}
\end{lemma}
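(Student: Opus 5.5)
The plan is to reduce the statement to the Azuma--Hoeffding inequality for a martingale whose increments are bounded \emph{almost surely}. We do this by stopping the Doob martingale $X_0,\ldots,X_t$ just before its first large step. The term $2\exp(-s^2/2S)$ will come from Azuma applied to the stopped process. The term $\Pr[\mathcal{B}]$ will pay for the event that the stopped and unstopped processes differ.

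The delicate point is that the stopping rule must be predictable. The event $\{|X_i-X_{i-1}|>c_i\}$ is revealed only at time $i$. If we stopped at the first index where it occurs, the truncated increment $(X_i-X_{i-1})\indicator_{\{|X_i-X_{i-1}|\le c_i\}}$ would not have conditional mean zero, and Azuma would not apply. This is exactly where a naive argument loses a factor $2$ in the $c_i$. Instead I will use the structure of the Doob martingale. Given $Y_1,\ldots,Y_{i-1}$, write $g_i(y)=\E[f\mid Y_1,\ldots,Y_{i-1},Y_i=y]$, so that $X_i=g_i(Y_i)$ and $X_{i-1}=\E_{Y_i}[g_i(Y_i)]$. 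Define the $\mathcal F_{i-1}$-measurable event
\[
G_i=\sbrac{\max_y |g_i(y)-X_{i-1}|\le c_i},
\]
and let $\sigma$ be the first $i$ with $G_i$ failing. Set $Z_i=X_{\min(i,\sigma-1)}$. Since $G_i$ is decided before $Y_i$ is revealed, $\{\sigma>i\}$ is $\mathcal F_{i-1}$-measurable. Hence $Z_i-Z_{i-1}=(X_i-X_{i-1})\indicator_{\{\sigma>i\}}$ has conditional mean zero and satisfies $|Z_i-Z_{i-1}|\le c_i$ surely. Azuma then gives $\Pr[|Z_t-Z_0|>s]\le 2\exp(-s^2/2S)$. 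On the event that every $G_i$ holds we have $Z_t=X_t$, and so
\[
\Pr[|X_t-X_0|>s]\le 2\exp(-s^2/2S)+\Pr\Big[\bigcup_i G_i^c\Big].
\]

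The remaining step, and the main obstacle, is to compare $\Pr[\bigcup_i G_i^c]$ with $\Pr[\mathcal{B}]$. I would first show that $G_i^c$ forces $|X_i-X_{i-1}|>c_i$. This holds whenever each $Y_i$ takes two values with probability $1/2$ each, which is exactly the setting in which the lemma is used in the proof of \Cref{lemma: graph sparsification for k = 2}. There $X_i-X_{i-1}=\pm\frac{g_i(1)-g_i(2)}{2}$, so the realized step always has the maximal absolute value. Consequently $G_i^c=\{|X_i-X_{i-1}|>c_i\}$ as events, $\bigcup_i G_i^c=\mathcal{B}$, and the bound is exactly the stated one.

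For general $Y_i$ a realized small step does not rule out a large potential step. In that case I would read $\mathcal{B}$ as the event that some conditional range exceeds $c_i$, which is the form in which the cited inequality of \citep{martingle06} is stated, and run the same argument verbatim. I expect the write-up to state explicitly which reading of $\mathcal{B}$ is used, and to note that the two readings agree in the symmetric binary application.
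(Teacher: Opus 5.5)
The paper gives no proof of this lemma; it only imports it from the cited survey. Your argument is therefore a genuinely different, self-contained route, and it is sound. Each $G_i$ depends only on $Y_1,\ldots,Y_{i-1}$, so the event $\{\sigma>i\}=\bigcap_{j\le i}G_j$ is $\mathcal{F}_{i-1}$-measurable. Hence $(X_i-X_{i-1})\indicator_{\{\sigma>i\}}$ is a martingale difference bounded by $c_i$, Azuma yields $2\exp(-s^2/2S)$, and $Z_t=X_t$ off $\bigcup_i G_i^c$.

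For uniform $Y_i\in\{1,2\}$, the realized step $\abs{X_i-X_{i-1}}=\abs{g_i(1)-g_i(2)}/2$ is itself $\mathcal{F}_{i-1}$-measurable and equals the largest possible step. So $\bigcup_i G_i^c=\mathcal{B}$, and you get the lemma verbatim in exactly the setting of \Cref{lemma: graph sparsification for k = 2}. There, the paper's bad event $\{\Delta_t>c_t\}$ is also predictable and contains $\{\abs{X_t-X_{t-1}}>c_t\}$, so your version covers the paper's use. For general $Y_i$, read $\max_y$ as an essential supremum over the law of $Y_i$. Compared with the bare citation, your route costs half a page but makes explicit a hypothesis the citation silently needs.

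That hypothesis is genuinely needed. Your reluctance to prove the literal statement for general $Y_i$ is not a limitation of your method, because as transcribed the lemma is false. Let $Y_1,\ldots,Y_t$ be i.i.d.\ Bernoulli$(p)$ with $p<1/2$, let $f=\epsilon\sum_i (Y_i/p-1)$, and let $c_i=\epsilon$. Then $X_i-X_{i-1}\in\{-\epsilon,\ \epsilon(1/p-1)\}$, so $\mathcal{B}=\{\exists i:\ Y_i=1\}$ has probability $1-(1-p)^t$, while off $\mathcal{B}$ one has $\abs{X_t-X_0}=t\epsilon$. Take $s=t\epsilon/2$. The left-hand side is at least $(1-p)^t$, and the right-hand side equals $2e^{-t/8}+1-(1-p)^t$. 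With $t=100$ and $p=10^{-4}$, the lemma would assert roughly $0.99\le 0.01$. So the bad event must be defined through the \emph{possible} step, as your $G_i^c$ does, or the lemma must be restricted to symmetric two-valued $Y_i$, which is all the paper uses.

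The same example shows that your side remark undersells how the naive truncate-and-recenter approach fails. It does not merely lose a factor $2$ in the $c_i$. The discarded compensators $\E[(X_i-X_{i-1})\indicator_{\{\abs{X_i-X_{i-1}}>c_i\}}\mid\mathcal{F}_{i-1}]$ sum to $t\epsilon(1-p)$ here. Consequently the recentered process differs from $X$ by that amount on $\mathcal{B}^c$.
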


\section{Threshold checking}\label{appendix: threshold checking}

In this section, we present the algorithms \iscollision{}, \birthdayestimate{} and the proofs of \Cref{lemma: collision detection} and \Cref{lemma: birthdayestimate}.

\begin{algorithm}
    \caption{\iscollision($U, \threshold, \confidence$)}\label{alg:collision}
    \begin{algorithmic}[1]
        \Require \iscq{} access to the subset $U \subseteq \vertexset$ and a threshold $\threshold$
        \Ensure $0$ or $1$
        \For{$i \gets 1$ to $O(\log(\confidence^{-1}))$} \label{alg: collision line 1}
            \State  Uniformly and independently sample $\preresamplesize = \sqrt{2\threshold}$ random edges from $U$
                \label{alg: collision line 2}
        \If{there is a collision}
            \State $\collision_i \gets 1$
        \Else 
            \State $\collision_i \gets 0$
        \EndIf
        \EndFor
        \If{more than half of $\collision_i$ are $1$}
            \State \Return  $1$
        \Else 
            \State \Return $0$
        \EndIf
    \end{algorithmic}
\end{algorithm}







\isollison*

\begin{proof}
Fix an iteration $i$ of \iscollision{} (Line~\ref{alg: collision line 1}).  
The probability that no collision occurs in the $i$-th iteration is
\begin{align*}
    \Pr[\collision_i = 0]
    = \prod_{j=0}^{\preresamplesize - 1}
      \left(1 - \frac{j}{\edgecount(U)}\right).
\end{align*}

\noindent If $\edgecount(U) \leq \threshold$,
We upper bound the probability of no collision as follows:
\begin{align*}
    \Pr[\collision_i = 0]
    &\leq \exp\left(
        - \sum_{j=0}^{\preresamplesize - 1}
        \frac{j}{\edgecount(U)}
    \right)
    && \text{($1 + x \leq e^{x}$ for all $x \in \R$)} \\
    &= \exp\left(
        - \frac{\preresamplesize(\preresamplesize - 1)}{2 \edgecount(U)}
    \right) \\
    &\leq \exp\left(
        - \frac{\preresamplesize(\preresamplesize - 1)}{2 \threshold}
    \right)
    && \text{(since $\edgecount(U) \leq \threshold$)} \\
    &\leq \frac{1}{e}
    && \text{(using $\preresamplesize = \sqrt{2\threshold}$)}.
\end{align*}

\noindent If $\edgecount(U) \geq 8\threshold$,  we lower bound the probability of no collision:
\begin{align*}
    \Pr[\collision_i = 0]
    &\geq \exp\left(
        - \sum_{j=0}^{\preresamplesize - 1}
        \left(
            \frac{j}{\edgecount(U)} +
            \frac{j^2}{\edgecount(U)^2}
        \right)
    \right)
    && \text{($1 - x \geq e^{-x - x^2}$ for $x \in [0, 2/3]$)} \\
    &\geq \exp\left(
        - \frac{\preresamplesize(\preresamplesize - 1)}{16 \threshold}
        + \Theta\!\left(\frac{1}{\sqrt{\threshold}}\right)
    \right) \\
    &\geq \frac{1}{2}
    && \text{(using $\preresamplesize = \sqrt{2\threshold}$)}.
\end{align*}

\noindent If $ \edgecount( U ) \leq \threshold $, $ \Pr[ \collision_i = 0 ] \leq \frac{ 1 }{ e } $. \iscollision{} outputs $0$ when more than half of $ \collision_i = 0 $. Over $ \log ( \confidence^{-1} ) $ parallel runs, using the upper tail estimate of the Chernoff Bound (see Lemma~\ref{lemma: Multiplicative Chernoff Bound}), the probability that \iscollision{} outputs $0$ is at most $ \exp ( - \log ( \confidence^{-1} ) ) = \confidence $. Similarly, if $ \edgecount( U ) \geq 8\threshold $, the probability that \iscollision{} outputs $1$ is at most $ \confidence $.

\end{proof}

\begin{algorithm}
    \caption{\birthdayestimate{}($ U, \threshold, \approxerror $)}\label{alg: birthdayestimate}
    \begin{algorithmic}[1]
        \Require \iscq{} access to the subset $ U \subseteq \vertexset $ and a threshold $ \threshold 
        \geq 1 $.
        \Ensure An estimate of $ \edgecount( U ) $
        \State Uniformly and independently sample $ \Theta \fbrac{ \sqrt { \threshold } / \approxerror  } $ random edges from $ U $.
        \State Let $ \resamplesize = \Theta\fbrac{ \nicefrac{ \sqrt{\threshold} }{ \approxerror } }  $ be the number of samples drawn and $ \collisioncount $ is the number of collisions.
        \State \Return $\binom{\resamplesize}{2}/\collisioncount $.
    \end{algorithmic}
\end{algorithm}





\birth*

\begin{proof}
    Let $\sbrac{e_1, e_2, \ldots, e_{\resamplesize}}$ be the sampled edges. or $1 \le i < j \le \resamplesize$, define the indicator random variable $X_{ij}$ to be $1$ if $e_i = e_j$ and $0$ otherwise . Then, 
    $$\Pr[X_{ij}=1] = \frac{1}{ \edgecount( U ) } \text{ and } \E[X_{ij}] = \frac{1}{ \edgecount( U ) }$$
    Define the total number of collisions by $ X = \sum_{1 \le i < j \le \resamplesize} X_{ij} $. By the linearity of expectation
    \begin{align*}
    \E[X] = \binom{\resamplesize}{2} \frac{1}{ \edgecount( U ) }
    \end{align*}
    Since the random variables $ X_{ij} $ are not independent. We will bound the variance using
    \begin{align}\label{eq: variance equation}
        \Var(X) &= \sum_{i < j } \Var(X_{ij}) + 2\sum_{i < j, k < \ell} \mathrm{Cov}(X_{ij},X_{k\ell})
    \end{align}
    Let us bound the first term $ \Var(X_{ij}) $ in~\eqref{eq: variance equation}
    \begin{align*}
        \sum_{i < j } \Var(X_{ij}) \leq \sum_{ i < j } \E[X_{ij}^2] \leq \binom{\resamplesize}{2} \frac{1}{ \edgecount( U ) } =  \E[X]
    \end{align*}
    Next, consider the covariance terms in~\eqref{eq: variance equation}. If $X_{ij}$ and $X_{k\ell}$ correspond to disjoint pairs, then they are independent and the covariance is zero. If they overlap in exactly one index (share exactly one index), then
    \begin{align*}
        \mathrm{Cov}(X_{ij},X_{j\ell}) &= \E[X_{ij}X_{j\ell}] - \E[X_{ij}]\E[X_{j\ell}]\\  
        &= \Pr[X_{ij} = X_{j\ell} = 1] - \frac{1}{ \edgecount( U )^2}\\
        &= \Pr[e_i = e_j = e_\ell] - \frac{1}{ \edgecount( U )^2 } \\
        &= 0
    \end{align*}
So, the bound on the variance is $ \Var(X)  = \sum_{i < j } \Var(X_{ij}) \leq \E[X] $. Using Chebyshev’s Inequality (Lemma~\ref{lemma: chebyshev}), we have
\begin{align}
    \nonumber\Pr[\abs{X - \E[X]} > \frac{\approxerror}{3} \E[X]] &\leq \frac{9\Var(X)}{\approxerror^2 (\E[X])^2}\\
    \nonumber&\leq \frac{9}{\approxerror^2\E[X]} &\text{$\fbrac{\Var(X) \leq \E[X]}$}\\
    \nonumber&= \frac{ 18\edgecount( U ) }{ \approxerror^2\resamplesize(\resamplesize - 1) } &\text{$\fbrac{ \E[X] = \binom{\resamplesize}{2} \frac{ 1 }{ \edgecount( U ) } }$}\\
    &\leq \frac{1}{ 16 } &\text{($ \edgecount_U \leq \threshold $ and $\resamplesize = \Theta ( \sqrt{ \threshold } /{ \approxerror } )$)}\label{ineq: collision bound}
\end{align}
To complete the proof, we need to show with probability at least $ 15/ 16 $
\begin{align*}
    \fbrac{1 - \approxerror} \edgecount( U ) \leq \binom{\resamplesize}{2}  \frac{1}{X} \leq \edgecount( U ) \fbrac{1 + \approxerror}
\end{align*}
From~\eqref{ineq: collision bound}, with probability at least $ 15/ 16 $
\begin{align*}
    \fbrac{1 - \frac{\approxerror}{3}}\E\tbrac{X} \leq X \leq \fbrac{1 + \frac{\approxerror}{3}} \E\tbrac{X} 
\end{align*}
This implies
\begin{align*}
    \Rightarrow \frac{ \edgecount( U ) }{\fbrac{1 + \frac{\approxerror}{3}}}\leq \binom{\resamplesize}{2}  \frac{1}{X} \leq \frac{\edgecount( U ) }{\fbrac{1 - \frac{\approxerror}{3}}}
\end{align*}
For any $ \approxerror \in (0, 1) $
\begin{equation*}
    1 - \approxerror \leq \fbrac{1 + \frac{\approxerror}{3}}^{-1} \text{ and } \fbrac{1 - \frac{\approxerror}{3}}^{-1} \leq 1 + \approxerror
\end{equation*}
Since $X$ is the number of collisions. we conclude that
    \begin{align*}
        \fbrac{1 - \approxerror} \edgecount( U ) \leq \binom{\resamplesize}{2} \frac{1}{\collisioncount} \leq \edgecount( U ) \fbrac{1 + \approxerror}
    \end{align*}
which completes the proof.
\end{proof}

\end{document}